\ifdefined\XeTeXversion\else\pdfoutput=1\fi
\documentclass[a4paper,onecolumn,11pt,unpublished,nopdfoutputerror]{quantumarticle}

\usepackage[utf8]{inputenc}
\usepackage[T1]{fontenc}
\usepackage[english]{babel}
\usepackage{amsmath,amssymb,amsthm,mathtools}
\usepackage{bm}
\usepackage{booktabs}
\usepackage{array}
\usepackage{graphicx}
\usepackage{tikz}
\usetikzlibrary{arrows.meta,calc,positioning}
\usepackage{microtype}
\usepackage{placeins}
\usepackage{float}
\usepackage[numbers,sort&compress]{natbib}
\usepackage[colorlinks=true,citecolor=quantumviolet,linkcolor=quantumviolet,urlcolor=quantumviolet]{hyperref}
\newtheorem{theorem}{Theorem}

\newtheorem{proposition}[theorem]{Proposition}
\newtheorem{corollary}[theorem]{Corollary}
\theoremstyle{definition}
\newtheorem{definition}{Definition}
\theoremstyle{remark}

\newcommand{\ket}[1]{\lvert #1\rangle}
\newcommand{\bra}[1]{\langle #1\rvert}

\newcommand{\expect}[1]{\left\langle #1\right\rangle}
\newcommand{\norm}[1]{\left\lVert #1\right\rVert}
\newcommand{\id}{\mathbb I}
\newcommand{\CCZ}{\operatorname{CCZ}}
\newcommand{\CZ}{\operatorname{CZ}}
\newcommand{\Hccz}{\ket{H_3}}
\newcommand{\Bell}{\mathcal B_\star}
\newcommand{\field}{\mathbb Q(\sqrt2)}

\newcommand{\rank}{\operatorname{rank}}

\begin{document}

\title{Device-Independent Self-Testing of the Three-Qubit CCZ Hypergraph State}

\author{Yunguang Han}
\affiliation{College of Computer Science and Technology, Nanjing University of Aeronautics and Astronautics, Nanjing 211106, P.~R.~China}
\author{Xin Kuang}
\affiliation{College of Computer Science and Technology, Nanjing University of Aeronautics and Astronautics, Nanjing 211106, P.~R.~China}
\author{Xingyuan Bu}
\affiliation{College of Computer Science and Technology, Nanjing University of Aeronautics and Astronautics, Nanjing 211106, P.~R.~China}
\author{Yukun Wang}
\email{wykun06@gmail.com}
\affiliation{Beijing Key Laboratory of Petroleum Data Mining, China University of Petroleum, Beijing 102249, People's Republic of China}
\affiliation{Key Lab of Processors, Institute of Computing Technology, CAS, Beijing 100190, People's Republic of China}

\maketitle

\begin{abstract}
The three-qubit CCZ state is the smallest rank-three hypergraph state and an
elementary entangled magic resource.  Its cubic phase is governed by
generalized stabilizers that are not Pauli strings, so standard graph-state
self-testing arguments do not apply directly.  We show that twenty
correlators, all obtainable from five of the eight global input triples in the
tripartite two-input, two-output scenario, determine this state and the action
of the Pauli $X/Z$ measurements up to local isometries.  The proof fixes eight
equally weighted computational branches and propagates conditional $X$-flip
relations across the branch cube, recovering the minus sign of the $111$
amplitude.  These five-context correlations are nonlocal,
but the canonical Pauli measurements cannot attain the largest quantum value
of any Bell inequality that they violate: whenever they maximize a Bell
expression, its local bound has the same value.  Introducing an independent
third measurement makes self-testing from maximal Bell violation possible.
We construct an explicit Bell inequality whose maximal quantum violation
self-tests the CCZ state and all three local measurements.  An exact
sum-of-squares decomposition proves the quantum bound, and its equality
conditions yield an analytic SWAP extraction.  Together, these results give
two explicit device-independent self-tests of the CCZ state and demonstrate
that determining a state and its measurements from several correlator
equalities is distinct from identifying them through the maximal violation of
a single Bell inequality.
\end{abstract}

\section{Introduction}
\label{sec:introduction}

A Bell experiment provides only conditional input--output probabilities; the
state, Hilbert spaces, and measurement operators remain uncharacterized.
Self-testing asks whether these probabilities nevertheless determine a
quantum realization up to local isometries and an auxiliary state that
factors from the extracted target system
\cite{MayersYao2004,SupicBowles2020}.  This requirement is stronger than Bell
nonlocality.  A violation rules out local hidden-variable models, whereas a
self-test identifies both the state and the action of selected measurements.

The operator information needed for self-testing is encoded in equality
conditions.  Saturated correlators yield state-dependent vector relations,
which can then be assembled into a local extraction map.  Sum-of-squares
decompositions expose the same relations through a Bell operator; NPA and
SWAP relaxations provide complementary numerical tools away from an ideal
point \cite{BampsPironio2015SOS,NPA2007,NPA2008,Bancal2015Swap,
Kaniewski2016Analytic}.  The three-qubit $W$ state has been robustly
self-tested using a dedicated Bell inequality \cite{Wu2014W}.  A
complementary device-independent tomography construction designed
permutationally invariant Bell inequalities for the $W$ state and used SWAP
relaxations to lower-bound its fidelity as a function of the observed
violation \cite{Pal2014Tomography}.  Other multipartite constructions use
only marginal information or cover broader families of symmetric
three-qubit states \cite{Li2018Marginal,Li2020SymmetricThreeQubit}.

For graph states, Pauli stabilizers already supply the local algebra required
by the extraction proof.  This leads to self-tests with data scaling linearly
in the graph and to scalable Bell inequalities with robust equality
conditions \cite{McKague2011GraphStates,Baccari2020GraphBell}.  Hypergraph
states replace graph edges by controlled phases acting on hyperedges of
arbitrary rank \cite{KruszynskaKraus2009,Qu2013Encoding,Rossi2013Hypergraph}.
Their natural generalized stabilizers contain controlled operations on the
remaining vertices of an incident hyperedge.  Except for special symmetric
subclasses with additional Pauli symmetries, these relations do not reduce to
Pauli strings \cite{Guhne2014Hypergraph,Lyons2015LUSymmetries,
Lyons2017LocalPauli}.  Recovering a higher-order phase from black-box
correlations therefore calls for a different mechanism from the usual
graph-state stabilizer argument.

We focus on the three-qubit state associated with a single rank-three
hyperedge,
\begin{equation}
\Hccz=\CCZ\ket{+}^{\otimes3}
=\frac1{\sqrt8}\sum_{a,b,c\in\{0,1\}}(-1)^{abc}\ket{abc}.
\label{eq:ccz-state-intro}
\end{equation}
The phase polynomial $abc$ changes only the computational branch $111$.
Correspondingly, the generalized stabilizers are
\begin{equation}
K_A=X_A\CZ_{BC},\qquad
K_B=X_B\CZ_{AC},\qquad
K_C=X_C\CZ_{AB}.
\label{eq:ccz-generalized-stabilizers}
\end{equation}
The phase is cubic in the computational bits, and the corresponding
stabilizer relation is nonlinear in the local $Z$ observables.  The state
$\Hccz$ is the minimal qubit instance containing a rank-three hyperedge.  The
entanglement and local-unitary structure of three-qubit and more general
hypergraph states have been studied in
Refs.~\cite{Guhne2014Hypergraph,Lyons2015LUSymmetries,
Qu2013ThreeQubit,Ghio2018HypergraphWitness}.  The simplicity of
Eq.~\eqref{eq:ccz-state-intro}
therefore hides the feature relevant here: one Boolean monomial changes a
single computational amplitude, while each local bit flip acquires a sign
conditioned on the other two bits.

\begin{figure}[t]
\centering
\begin{tikzpicture}[
  scale=1.05,
  vertex/.style={circle,draw=black,very thick,fill=white,minimum size=7mm},
  hyperedge/.style={draw=quantumviolet,line width=1.3pt,fill=quantumviolet!8},
  every node/.style={font=\small}
]
\coordinate (A) at (0,1.45);
\coordinate (B) at (-1.35,-0.85);
\coordinate (C) at (1.35,-0.85);
\draw[hyperedge] plot[smooth cycle,tension=0.72] coordinates {
 (-0.34,2.08) (1.15,0.96) (1.90,-1.12)
 (0,-1.55) (-1.90,-1.12) (-1.15,0.96)
};
\node[vertex] at (A) {$A$};
\node[vertex] at (B) {$B$};
\node[vertex] at (C) {$C$};
\node[quantumviolet,fill=white,rounded corners=1pt,inner sep=2.2pt,
      align=center] at (0,0.12)
  {$e=\{A,B,C\}$\\[-0.7mm]
   \scriptsize rank-three hyperedge};
\node[align=center] at (0,-1.96)
  {$\Hccz=\CCZ_{ABC}\ket{+}^{\otimes3}$};
\end{tikzpicture}
\caption{Hypergraph representation of $\Hccz$.  The shaded region denotes the
single rank-three hyperedge $e=\{A,B,C\}$; no pairwise graph edges are
present.  The gate $\CCZ_{ABC}$ changes the sign of only the $111$
computational amplitude and gives generalized stabilizers such as
$K_A=X_A\CZ_{BC}$, together with cyclic permutations.}
\label{fig:ccz-hypergraph}
\end{figure}

The same rank-three phase has several operational roles.  Hypergraph resource
states generated by CCZ gates enable measurement-based constructions using
only Pauli measurements, realize nontrivial symmetry-protected resource
phases, and can change or parallelize the circuit-depth structure of a
computation \cite{MillerMiyake2016UnionJack,MillerMiyake2018SPTO,
MillerSandersMiyake2017,GachechiladzeGuhneMiyake2019,
Takeuchi2019HypergraphUniversal}.  The state $\CCZ\ket{+}^{\otimes3}$ is also
an entangled nonstabilizer resource for fault-tolerant computation.  CCZ and
Toffoli gates are related by a local Hadamard conjugation on the target
qubit.  Magic-state distillation supplies non-Clifford resources in
fault-tolerant architectures \cite{BravyiKitaev2005}, with dedicated
protocols for Toffoli and CCZ resource states developed in
Refs.~\cite{Jones2013Toffoli,Eastin2013Toffoli,HaahHastings2018CCZ}.
Resource-theoretic magic and the nonstabilizerness of
hypergraph states give complementary ways to quantify this role
\cite{HowardCampbell2017Magic,ChenYanZhou2024Magic}.  Experimentally,
four-qubit hypergraph
states and their Pauli-measurement processing have also been demonstrated on
a programmable photonic platform \cite{HuangEtAl2024Experiment}.

These applications have motivated a substantial literature on state
verification with trusted local measurements.  Stabilizer testing for
measurement-only computation was introduced for graph-state resources
\cite{HayashiMorimae2015}, extended to hypergraph-state verification
\cite{MorimaeTakeuchiHayashi2017}, and later generalized to all
polynomial-time-generated hypergraph states without assuming independent and
identically distributed preparations \cite{TakeuchiMorimae2018}.  General
optimality questions for verification with local measurements were studied in
Ref.~\cite{Pallister2018LocalVerification}.  For hypergraph states, Zhu and
Hayashi developed Pauli-measurement protocols whose efficiency is controlled
by the coloring and degree of the hypergraph
\cite{ZhuHayashi2019HypergraphVerification}, together with extensions to
adversarial preparation \cite{ZhuHayashi2019AdversarialPRL}.  Such protocols
test whether a supplied state is close to a specified target under a trusted
measurement model.  They do not infer the measurement operators from
input--output data.  Computational self-testing provides another model in
which cryptographic assumptions are used to certify a single quantum device
\cite{MetgerVidick2021Computational}.  In that model, Mizutani \emph{et al.}
showed that a CCZ magic state can be self-tested
\cite{Mizutani2022MagicSelfTesting}.  Their protocol and ours therefore
target the same nonstabilizer resource under different assumptions: the
former uses one device and cryptographic hardness, whereas the standard Bell
setting considered here uses three noncommunicating parties and neither
trusted measurements nor computational assumptions.

Bell inequalities tailored to hypergraph states have established
multipartite nonlocality, Hardy-type
contradictions, violations that grow with system size, and robustness to
particle loss; later work quantified robust Bell nonlocality for symmetric
families \cite{Guhne2014Hypergraph,Gachechiladze2016Hypergraph,
Noller2023SymmetricHypergraph}.  These results characterize nonclassicality.
Self-testing additionally requires a characterization of every strategy
attaining the largest quantum value and a local isometry that extracts the
target state and measurements.  The missing question is whether the cubic
CCZ phase itself can be recovered from black-box Bell correlations, and
whether one maximally violated Bell inequality can enforce the required
operator relations.

General existence theorems cover every pure bipartite entangled state
\cite{Coladangelo2017Bipartite} and every pure entangled multiqubit state up
to complex conjugation.  The universal multipartite construction in
Ref.~\cite{BalanzoJuando2026AllPure} uses at
most $9\cdot2^{n-2}-4$ binary measurements per party, which gives an upper
bound of fourteen for three-qubit states.  This bound reflects a protocol
designed for arbitrary target states rather than one optimized for a
structured state such as CCZ.  Recent constructions address scalable
certification of generic multipartite states with ancillary Bell pairs
\cite{Liu2026Scalable} and generate self-testing Bell inequalities from
tensor-network connectors \cite{Abiuso2025Renormalization}.  Those results
establish broad coverage.  By specializing to CCZ, the first construction
below uses only two binary measurements per party and exposes which
correlations recover its cubic phase and which relations enter the extraction
proof.  Graph-theoretic and custom
sum-of-squares methods provide systematic tools for such tailored Bell
inequalities \cite{Bharti2022GraphTheoretic,Barizien2024CustomBell}.

To the best of our knowledge, the results below give the first explicit
device-independent self-test tailored to the three-qubit CCZ hypergraph
state.  They also separate two certification tasks that coincide in many
familiar examples.  A collection of correlator equalities may determine a
state and its measurements, while maximal-violation self-testing requires a
single Bell expression with a strictly nonlocal largest quantum value.  This
distinction is related to the convex geometry of quantum correlations
\cite{GohEtAl2018Geometry,ChenEtAl2023Nonexposed}.

First, the analytic self-test uses the tripartite binary-output Bell scenario
with local input cardinalities $(2,2,2)$, taking $X/Z$ as the two inputs at
each site.  Its twenty correlators come from only five of the eight possible
global contexts.
Vanishing $Z$ moments make the eight computational
branches equiprobable.  The remaining equalities give generalized
stabilizers and a norm-square identity that fixes the three negative flips
incident on $111$.  Cross-party commutation propagates these signs over the
branch cube and closes the SWAP proof.  The argument assumes neither local
qubit dimension nor one-dimensional branch spaces.  The same five contexts
also give a Bell value $11$ above the local bound $9$, so the tested
correlations are strictly nonlocal.

Second, strict nonlocality is insufficient to make the canonical
measurements a maximally violating strategy.  If the CCZ state with Pauli
$X/Z$ measurements attains the largest quantum value $q$ of any Bell
expression, we prove that its local bound is also $q$.  Hence no Bell
inequality maximized by this canonical realization is violated by it.  This
conclusion concerns the canonical realization; rotated or otherwise
inequivalent two-setting measurements remain outside its scope.

Third, an independent reflection $D_p$ for each party makes
maximal-violation self-testing possible.  We construct a nine-orbit Bell
inequality with local and quantum bounds $258-36\sqrt2$ and
$42+120\sqrt2$.  Its maximal violation self-tests the state and all three
local measurements.  The upper bound follows from a finite noncommutative
sum-of-squares certificate over $\mathbb Q(\sqrt2)$; the equality conditions
lead to an analytic extraction proof.  Section~\ref{sec:quantitative}
records the nonideal norm estimates that follow directly from the two
proofs.

\section{Bell scenario and definition of self-testing}
\label{sec:scenario}

We use the standard tripartite tensor-product Bell model.  The parties are
$p\in\{A,B,C\}$, the shared state acts on
$\mathcal H_A\otimes\mathcal H_B\otimes\mathcal H_C$, and no finite-dimensional
bound is imposed.  A local binary measurement can be taken to be a Hermitian
reflection after a local Naimark dilation.  A mixed shared state can be
purified by assigning a purifying register to one party.  We may therefore
write the unknown realization as a normalized pure state $\ket\psi$ and local
reflections satisfying
\begin{equation}
(M_p^x)^2=\id,\qquad (M_p^x)^\dagger=M_p^x,
\qquad [M_p^x,M_q^y]=0\quad(p\ne q).
\label{eq:black-box-model}
\end{equation}
No commutation or anticommutation relation is assumed between different
measurements of the same party.

For a nonempty party subset $S$ and one selected measurement at each party in
$S$, a correlator is
\begin{equation}
E(M_{p_1}^{x_1}\cdots M_{p_k}^{x_k})
=\bra\psi M_{p_1}^{x_1}\cdots M_{p_k}^{x_k}\ket\psi.
\label{eq:correlator-definition}
\end{equation}
For binary outcomes, the correlators belonging to a fixed context determine
its full conditional probability distribution by Fourier inversion.  The
two-setting results use local inputs $Z_p,X_p$.  The Bell result uses a third
reflection $D_p$.  In the latter scenario $D_p$ is an independent black-box
input; the algebra never assumes a linear relation among $D_p$, $X_p$, and
$Z_p$.

The target state is Eq.~\eqref{eq:ccz-state-intro}.  Its reference
measurements are
\begin{equation}
Z_p^{\rm ref}=\sigma_z^{(p)},\qquad
X_p^{\rm ref}=\sigma_x^{(p)},
\label{eq:reference-xz}
\end{equation}
and, only in the reference realization of the three-setting protocol,
\begin{equation}
D_p^{\rm ref}=\frac{\sigma_x^{(p)}+\sigma_z^{(p)}}{\sqrt2}.
\label{eq:reference-d}
\end{equation}
The generalized stabilizer in Eq.~\eqref{eq:ccz-generalized-stabilizers} can
be written entirely in terms of the reference $Z$ and $X$ measurements.  For
example,
\begin{equation}
\CZ_{BC}=\frac{\id+Z_B+Z_C-Z_BZ_C}{2},
\label{eq:cz-polynomial}
\end{equation}
so that the relation $K_A\Hccz=\Hccz$ is visible through ordinary
correlators.

\begin{definition}[State-and-measurement self-testing]
Let $\mathbf c$ be correlation data generated by a reference state
$\ket{\psi_{\rm ref}}$ and reference measurements $M_{p,\rm ref}^x$.  We say
that these data self-test the state and measurements if every implementation
producing $\mathbf c$ admits a local isometry
$\Phi=\Phi_A\otimes\Phi_B\otimes\Phi_C$ and a normalized auxiliary state
$\ket{\mathrm{junk}}$ such that
\begin{align}
\Phi\ket\psi
 &=\ket{\psi_{\rm ref}}\otimes\ket{\mathrm{junk}},
\label{eq:self-test-state-definition}\\
\Phi M_p^x\ket\psi
 &=M_{p,\rm ref}^x\ket{\psi_{\rm ref}}
   \otimes\ket{\mathrm{junk}}
\label{eq:self-test-measurement-definition}
\end{align}
for the measurements specified by the theorem.
\end{definition}

The results below use a canonical local SWAP isometry.  Define the spectral
projectors of $Z_p$ by
\begin{equation}
P_a^p=\frac{\id+(-1)^aZ_p}{2},\qquad a\in\{0,1\},
\label{eq:branch-projectors}
\end{equation}
and set
\begin{equation}
\Phi_p\ket\varphi
=\ket0\otimes P_0^p\ket\varphi
+\ket1\otimes X_pP_1^p\ket\varphi.
\label{eq:canonical-isometry}
\end{equation}
The order $X_pP_1^p$ is fixed.  Equation~\eqref{eq:black-box-model} alone
gives
\begin{equation}
(P_0^p)^\dagger P_0^p
+(X_pP_1^p)^\dagger(X_pP_1^p)=P_0^p+P_1^p=\id,
\end{equation}
so Eq.~\eqref{eq:canonical-isometry} is an isometry without assuming local
anticommutation.  Anticommutation will instead be derived on the state.  All
reference matrices are real, so the usual complex-conjugation ambiguity of
self-testing does not alter the target realization considered here.

\section{Self-testing from twenty correlators}
\label{sec:direct}

The construction uses only the two binary observables $Z_p$ and $X_p$ at each
party.  It therefore lies in the tripartite binary-output Bell scenario with
local input cardinalities $(2,2,2)$, the smallest symmetric scenario in which
every party has a nontrivial measurement choice.  Its input is the set of
twenty target correlators summarized in
Table~\ref{tab:direct-data}.  Although the table contains one-, two-, and
three-party moments, all of them are obtained from only five of the eight
possible global measurement contexts,
\begin{equation}
ZZZ,\qquad XZZ,\qquad ZXZ,\qquad ZZX,\qquad XXX.
\label{eq:five-contexts}
\end{equation}
The five listed contexts supply all correlator data used by the proof; the
three contexts containing exactly two $X$ measurements are omitted.

\begin{table}[t]
\caption{The twenty correlators used by the analytic self-test.  In the
middle two rows, $\{p,q,r\}=\{A,B,C\}$.}
\label{tab:direct-data}
\centering
\begin{tabular}{p{0.47\columnwidth}p{0.19\columnwidth}p{0.24\columnwidth}}
\toprule
Correlators & Target value & Contexts \\
\midrule
$Z_S$ for $\emptyset\ne S\subseteq\{A,B,C\}$ & $0$ & $ZZZ$ \\
$X_p,X_pZ_q,X_pZ_r$ & $1/2$ & $XZZ,ZXZ,ZZX$ \\
$X_pZ_qZ_r$ & $-1/2$ & $XZZ,ZXZ,ZZX$ \\
$X_AX_BX_C$ & $1/2$ & $XXX$ \\
\bottomrule
\end{tabular}
\end{table}

\begin{theorem}[Two-setting analytic self-test]
\label{thm:direct-ccz}
Let $\ket\psi$ and $\{Z_p,X_p\}_{p=A,B,C}$ satisfy
Eq.~\eqref{eq:black-box-model}.  If the twenty correlators in
Table~\ref{tab:direct-data} take their target values, then the local isometry
in Eq.~\eqref{eq:canonical-isometry} satisfies
\begin{equation}
\Phi\ket\psi=\Hccz\otimes\ket{\mathrm{junk}}
\label{eq:direct-state-extraction}
\end{equation}
for a normalized auxiliary state.  For every party $p$, the same isometry
also satisfies
\begin{align}
\Phi Z_p\ket\psi
 &=\sigma_z^{(p)}\Hccz\otimes\ket{\mathrm{junk}},\notag\\
\Phi X_p\ket\psi
 &=\sigma_x^{(p)}\Hccz\otimes\ket{\mathrm{junk}}.
\label{eq:direct-measurement-extraction}
\end{align}
Thus the stated correlators self-test the CCZ state and the Pauli $X/Z$
measurements.
\end{theorem}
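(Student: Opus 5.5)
The plan is to work entirely with the branch vectors $\ket{\psi_{abc}}=P_a^AP_b^BP_c^C\ket\psi$, which are orthogonal and sum to $\ket\psi$. First, the seven vanishing $Z$ moments in the first row of Table~\ref{tab:direct-data}, inverted by Fourier transform over $\{0,1\}^3$, give $\norm{\psi_{abc}}^2=1/8$ for all eight branches.

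Second, I would extract the generalized stabilizers. For party $A$, set $W_{BC}=\tfrac12(\id+Z_B+Z_C-Z_BZ_C)$, cf.\ Eq.~\eqref{eq:cz-polynomial}. Because $Z_B$ and $Z_C$ commute, $W_{BC}$ is a Hermitian reflection, and it commutes with $X_A$. Hence $X_AW_{BC}$ is a Hermitian unitary. The four $X_A$-correlators of the table give $\bra\psi X_AW_{BC}\ket\psi=\tfrac12(\tfrac12+\tfrac12+\tfrac12+\tfrac12)=1$. Equality in Cauchy--Schwarz then forces $X_AW_{BC}\ket\psi=\ket\psi$, and cyclically for $B$ and $C$. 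Since $W_{BC}P_b^BP_c^C=(-1)^{bc}P_b^BP_c^C$ and $X_A$ commutes with the $B,C$ projectors, projecting gives
\[
X_A(\psi_{0bc}+\psi_{1bc})=(-1)^{bc}(\psi_{0bc}+\psi_{1bc}),
\]
together with its two cyclic analogues.

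Third comes the step I expect to be the main obstacle: upgrading these slice relations to genuine branch flips $X_A\psi_{0bc}=(-1)^{bc}\psi_{1bc}$. Equivalently, one must show that $X_A$ anticommutes with $Z_A$ on the state, i.e.\ that the defect $e^A_{bc}=X_A\psi_{0bc}-(-1)^{bc}\psi_{1bc}$ vanishes. The stabilizer data alone are insensitive to this, since $\bra\psi X_A\ket\psi$ is already fixed by the branch norms. This is where the $XXX$ context must enter. I would first expand $\bra\psi X_AX_BX_C\ket\psi$ using the three stabilizer relations to move each $X_p$ across a branch. This rewrites the correlator as $\tfrac12$ minus a sum of nonnegative terms, namely squared norms of the defects $e^p$ (or inner products bounded by them via Cauchy--Schwarz and $\norm{\psi_{abc}}^2=1/8$). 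The target value $1/2$ then forces all defects to vanish. The sign bookkeeping is concentrated on the three flips incident on $111$: these carry the minus sign, and they are exactly what a norm-square identity built from $\psi_{111}$ and its three neighbours should pin down. If the direct expansion is messy, my fallback is to write $\norm{X_AX_BX_C\psi_{000}-\sqrt8\cdots}^2$-type expressions explicitly and show that they vanish.

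Finally, with the flips $X_p\psi_{\dots a_p=0\dots}=\pm\psi_{\dots a_p=1\dots}$ established, I would propagate along edges of the branch cube. Cross-party commutation makes the composite flips path-independent, giving
\[
X_A^{a}X_B^{b}X_C^{c}\psi_{abc}=(-1)^{abc}\psi_{000}
\]
with the operators taken in the fixed order of Eq.~\eqref{eq:canonical-isometry}. Substituting into $\Phi\ket\psi=\sum_{abc}\ket{abc}\otimes X_A^aX_B^bX_C^c\psi_{abc}$ yields Eq.~\eqref{eq:direct-state-extraction} with $\ket{\mathrm{junk}}=\sqrt8\,\psi_{000}$, which is normalized. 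For $Z_p\ket\psi$, the projectors simply acquire the sign $(-1)^{a_p}$. For $X_p\ket\psi$, the flip relations exchange branches $a_p\leftrightarrow 1-a_p$ with matching signs, which reproduces $\sigma_x^{(p)}\Hccz\otimes\ket{\mathrm{junk}}$ and gives Eq.~\eqref{eq:direct-measurement-extraction}.
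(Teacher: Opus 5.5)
Your first two steps (uniform branches from the $Z$ moments, and the saturated stabilizers $X_A\ket\psi=C_{BC}\ket\psi$ with their sector versions) match the paper. So does your final extraction once all twelve edge flips are known. The gap is in the third step.

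Expand $T=\expect{X_AX_BX_C}$ using the $A$ relation on the bra and the $C$ relation on the ket, then resolve into the sectors $P_a^AP_c^C\ket\psi$. The sectors $(a,c)\neq(1,1)$ contribute the constants $\tfrac14,0,0$. These are fixed by the stabilizers and uniform norms alone, independent of how $X_B$ acts on the individual branches there. What remains is the single exact identity
\[
\tfrac12-T=\tfrac12\norm{\{X_B,Z_B\}P_1^AP_1^C\ket\psi}^2 .
\]
Using $T=\tfrac12$ with each party in the middle therefore forces only the three negative flips on the edges incident on $111$, for example $X_B\ket{\psi_{101}}=-\ket{\psi_{111}}$. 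The expansion does not produce ``$\tfrac12$ minus the squared norms of all defects $e^p$'', and the other nine defects never enter it. Your final step presupposes all twelve flips and uses cross-party commutation only for path independence, so those nine edges are never established.

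The missing idea is a propagation argument in which commutation and the equal branch norms do real work.

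\begin{itemize}
\item The top flips give $X_A\ket{\psi_{101}}=-X_AX_B\ket{\psi_{111}}=-X_BX_A\ket{\psi_{111}}=X_B\ket{\psi_{011}}=:\ket{w}$.
\item The first form of $\ket w$ lies in the sector $B=0,C=1$ and the second in $A=0,C=1$. So $\ket w$ lies in branch $001$ and has norm $r=1/\sqrt8$.
\item The sector stabilizer for $bc=01$ gives $X_A\ket{\psi_{001}}=\ket{\psi_{101}}+(\ket{\psi_{001}}-\ket{w})$. This is a sum of two vectors with orthogonal $A$ support.
\item Unitarity of $X_A$ and $\norm{\psi_{001}}=\norm{\psi_{101}}=r$ then force $\ket w=\ket{\psi_{001}}$.
\end{itemize}

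Repeat this on the faces $b=1$ and $a=1$ to get the six middle edges. Then identify $X_A\ket{\psi_{100}}=X_B\ket{\psi_{010}}$ with $\ket{\psi_{000}}$ by the same support-and-norm argument. The last edge follows from $X_C\ket{\psi_{001}}=X_CX_A\ket{\psi_{101}}=X_AX_C\ket{\psi_{101}}=X_A\ket{\psi_{100}}=\ket{\psi_{000}}$.

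This is where uniformity of the branch norms is essential, not merely for normalizing $\ket{\mathrm{junk}}$. Without this step, your claim that $T=\tfrac12$ forces all defects to vanish is unsupported.
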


We give the conceptual proof here and the complete derivation in
Appendix~\ref{app:direct-proof}.  Define the eight computational branches
\begin{equation}
\ket{\psi_{abc}}=P_a^AP_b^BP_c^C\ket\psi.
\label{eq:direct-branches}
\end{equation}
The seven nontrivial $Z$ moments are precisely the nonconstant Fourier
coefficients of the joint $Z$-outcome distribution.  Their vanishing implies
\begin{equation}
\norm{\psi_{abc}}^2=\frac18
\qquad\text{for all }a,b,c\in\{0,1\}.
\label{eq:direct-uniform-branches}
\end{equation}

Next define
\begin{equation}
C_{BC}=\frac{\id+Z_B+Z_C-Z_BZ_C}{2}
=\id-2P_1^BP_1^C
\label{eq:c-bc}
\end{equation}
and cyclic variants.  The four mixed correlators associated with party $A$
give
\begin{equation}
\expect{X_AC_{BC}}=1.
\end{equation}
Because $X_AC_{BC}$ is a Hermitian reflection, saturation gives
\begin{equation}
X_A\ket\psi=C_{BC}\ket\psi,
\label{eq:direct-stabilizer-a}
\end{equation}
and similarly for $B$ and $C$.  These are state-dependent hypergraph
stabilizer relations extracted from observed statistics.

The stabilizer relations do not yet determine how $X_p$ exchanges the two
$Z_p$ sectors.  Let
\begin{equation}
\ket{\eta_{ac}^{(B)}}=P_a^AP_c^C\ket\psi
=\ket{\psi_{a0c}}+\ket{\psi_{a1c}}
\end{equation}
and set $T=\expect{X_AX_BX_C}$.  Resolving $T$ into the four $(a,c)$
sectors and using Eq.~\eqref{eq:direct-stabilizer-a} yields the identity
\begin{equation}
\frac12-T
=\frac12\norm{\{X_B,Z_B\}\ket{\eta_{11}^{(B)}}}^2.
\label{eq:one-square-main}
\end{equation}
The target value $T=1/2$ therefore forces anticommutation on the branch
adjacent to $111$.  Cycling the middle party gives the three top-edge flips.
Cross-party commutation and the equal branch norms then propagate those
relations around the branch cube, without assuming that any branch space is
one-dimensional.  The result is
\begin{align}
X_A\ket{\psi_{abc}}&=(-1)^{bc}\ket{\psi_{1-a,b,c}},\notag\\
X_B\ket{\psi_{abc}}&=(-1)^{ac}\ket{\psi_{a,1-b,c}},\notag\\
X_C\ket{\psi_{abc}}&=(-1)^{ab}\ket{\psi_{a,b,1-c}}.
\label{eq:direct-branch-flips}
\end{align}

\begin{figure}[t]
\centering
\begin{tikzpicture}[
  scale=1.12,
  vertex/.style={circle,draw=black,fill=white,minimum size=6.8mm,inner sep=0pt,font=\scriptsize},
  edge/.style={draw=black!65,line width=0.7pt},
  topedge/.style={draw=quantumviolet,line width=2pt,dash pattern=on 4pt off 2pt},
  every node/.style={font=\scriptsize}
]
\coordinate (000) at (0,0);
\coordinate (100) at (2,0);
\coordinate (010) at (0,2);
\coordinate (110) at (2,2);
\coordinate (001) at (0.8,0.8);
\coordinate (101) at (2.8,0.8);
\coordinate (011) at (0.8,2.8);
\coordinate (111) at (2.8,2.8);
\draw[edge] (000)--(100)--(110)--(010)--cycle;
\draw[edge] (001)--(101)--(111)--(011)--cycle;
\draw[edge] (000)--(001);
\draw[edge] (100)--(101);
\draw[edge] (010)--(011);
\draw[topedge] (111)--(011);
\draw[topedge] (111)--(101);
\draw[topedge] (111)--(110);
\foreach \v in {000,100,010,110,001,101,011,111}
  \node[vertex] at (\v) {$\v$};
\node[quantumviolet,align=center] at (1.4,-0.52)
  {dashed: negative seed flips};
\node[black!65,align=center] at (1.4,-0.80)
  {solid: remaining flips};
\end{tikzpicture}
\caption{The eight $Z$ branches form a cube.  The one-square identity gives
the three negative flips incident on $111$ (violet dashed edges); commutation
and equal branch norms determine the remaining flips.}
\label{fig:branch-cube}
\end{figure}

Expanding the local isometry now gives
\begin{equation}
\Phi\ket\psi
=\sum_{a,b,c}\ket{abc}\otimes
X_A^aX_B^bX_C^c\ket{\psi_{abc}}.
\end{equation}
Equation~\eqref{eq:direct-branch-flips} reduces every corrected branch to
\begin{equation}
X_A^aX_B^bX_C^c\ket{\psi_{abc}}
=(-1)^{abc}\ket{\psi_{000}}.
\end{equation}
Since $\norm{\psi_{000}}^2=1/8$, the normalized junk state is
$\ket{\mathrm{junk}}=\sqrt8\ket{\psi_{000}}$, proving
Eq.~\eqref{eq:direct-state-extraction}.  Summing the branch-flip equations
also gives $\{X_p,Z_p\}\ket\psi=0$, from which
Eq.~\eqref{eq:direct-measurement-extraction} follows.

The omitted two-$X$ data are consequences rather than inputs.  For example,
the stabilizer relations and cross-party commutation reduce
$\expect{X_AX_BZ_C^t}$, $t\in\{0,1\}$, to a diagonal $Z$ polynomial.  Uniform
branches then give
\begin{equation}
\expect{X_AX_B}=\expect{X_AX_BZ_C}=\frac12,
\end{equation}
and cyclic permutations supply the other four moments.  Conversely, the
$XXX$ condition plays a genuine role in this proof.  The other nineteen
conditions admit an explicit classical model, given in
Appendix~\ref{app:direct-proof}, with $\expect{XXX}=0$.  This model identifies
the $XXX$ equality as the condition that removes the exhibited
classical component in the present argument.

\begin{corollary}[All two-setting correlators]
\label{cor:full-xz-correlations}
Under the assumptions of Theorem~\ref{thm:direct-ccz}, all twenty-six
nonempty correlators formed from the $X/Z$ measurements are fixed.  The six
correlators not listed in Table~\ref{tab:direct-data} are
\begin{equation}
\expect{X_pX_q}=\expect{X_pX_qZ_r}=\frac12,
\qquad \{p,q,r\}=\{A,B,C\}.
\end{equation}
\end{corollary}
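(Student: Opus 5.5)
The plan is to compute the six missing correlators directly from the stabilizer relations of Theorem~\ref{thm:direct-ccz}, without appealing to the extracted isometry.

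Take the pair $\{p,q\}=\{A,B\}$ with $r=C$, and fix $t\in\{0,1\}$. Since $X_A$ commutes with $X_B$ and $Z_C$, I write
\begin{equation*}
\expect{X_AX_BZ_C^t}=\bra\psi X_A\,(X_BZ_C^t)\ket\psi .
\end{equation*}
Because $X_A$ is Hermitian, $\bra\psi X_A=(X_A\ket\psi)^\dagger$. Equation~\eqref{eq:direct-stabilizer-a} then gives $\bra\psi X_A=\bra\psi C_{BC}$. For the right-hand factor, $Z_C^t$ commutes with $X_B$, so $X_BZ_C^t\ket\psi=Z_C^tX_B\ket\psi=Z_C^tC_{AC}\ket\psi$, using the $B$-stabilizer. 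Altogether,
\begin{equation*}
\expect{X_AX_BZ_C^t}=\bra\psi C_{BC}\,Z_C^t\,C_{AC}\ket\psi .
\end{equation*}
The operator $C_{BC}$ involves only $Z_B$ and $Z_C$, while $C_{AC}$ involves only $Z_A$ and $Z_C$. Hence $C_{BC}Z_C^tC_{AC}$ is a polynomial in the mutually commuting $Z_A$, $Z_B$, $Z_C$. This holds even though $Z_C$ appears in both $C$ factors: same-party operators of a single observable commute with themselves.

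Its expectation is therefore $\sum_{abc}\|\psi_{abc}\|^2 f_t(a,b,c)$, where $f_t$ is the classical value of the product on branch $abc$. By Eq.~\eqref{eq:direct-uniform-branches} this equals the uniform average $\frac18\sum_{abc} f_t(a,b,c)$. On branch $abc$ the factors take the values
\begin{equation*}
C_{BC}\mapsto(-1)^{bc},\qquad C_{AC}\mapsto(-1)^{ac},\qquad Z_C\mapsto(-1)^c .
\end{equation*}
The product is thus $(-1)^{c(a+b+t)}$.

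For $t=0$, the four branches with $c=0$ each contribute $+1$. The four with $c=1$ contribute $(-1)^{a+b}$, which sums to $0$. The average is $4/8=\frac12$.

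For $t=1$, the branches with $c=0$ again contribute $4$. The branches with $c=1$ contribute $(-1)^{a+b+1}$, which also sums to $0$. Again the average is $\frac12$.

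Cyclic relabeling of the parties gives the remaining four moments. Together with Table~\ref{tab:direct-data}, this accounts for all $7+12+1+6=26$ nonempty correlators. There are $3^3-1=26$ nonempty ways to choose $\id$, $X$ or $Z$ at each site, so the count is complete.

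The only delicate point is the ordering argument. One must use the stabilizer relation once on the bra side and once on the ket side. This is legitimate only because $X_A$, $X_B$ and $Z_C$ act on distinct parties and therefore commute. No same-party $X$/$Z$ relation is invoked, which is why this step does not need Eq.~\eqref{eq:direct-branch-flips}. As a consistency check, one could also verify the values via $\Phi$ and the reference state $\Hccz$.
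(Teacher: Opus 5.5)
Your proposal is correct and follows the paper's own argument (Appendix~A, ``Redundant moments''): apply the $A$-stabilizer on the bra and the $B$-stabilizer on the ket, using cross-party commutation, to reduce $\expect{X_AX_BZ_C^t}$ to $\expect{C_{BC}C_{AC}Z_C^t}$, then evaluate this diagonal $Z$-polynomial over the uniform branches. Your explicit branch value $(-1)^{c(a+b+t)}$ and the $7+12+1+6=26$ count fill in details that the paper leaves implicit.
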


\section{The canonical \texorpdfstring{$X/Z$}{X/Z} measurements cannot
maximize a violated Bell inequality}
\label{sec:obstruction}

Theorem~\ref{thm:direct-ccz} certifies the target from several correlator
constraints.  The same correlations are strictly nonlocal.  To see this,
write $S_X=\sum_pX_p$, $S_{ZZ}=\sum_{p<q}Z_pZ_q$,
$S_{XZ}=\sum_{p\ne q}X_pZ_q$,
$S_{XZZ}=\sum_pX_pZ_qZ_r$, and use the analogous notation for
$S_{XXX}$ and $S_{ZZZ}$.

\begin{proposition}[A five-context Bell violation]
\label{prop:five-context-bell-witness}
Every local correlation satisfies
\begin{equation}
\mathcal I_{20}:=S_{ZZ}+2S_{ZZZ}+S_X+S_{XZ}
-3S_{XZZ}+4S_{XXX}\leq9.
\label{eq:five-context-bell-witness}
\end{equation}
The CCZ state with Pauli $X/Z$ measurements attains
$\mathcal I_{20}=11$.  The expression uses only the five contexts in
Eq.~\eqref{eq:five-contexts}.
\end{proposition}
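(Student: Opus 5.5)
The plan has two parts: bound $\mathcal I_{20}$ over local correlations by checking deterministic strategies, and evaluate it on the CCZ realization using Table~\ref{tab:direct-data} and Corollary~\ref{cor:full-xz-correlations}. The five-context claim is a syntactic check. Every monomial in $\mathcal I_{20}$ is measurable in one of the contexts of Eq.~\eqref{eq:five-contexts}: $Z$-only terms in $ZZZ$, $X_p$ and $X_pZ_q$ and $X_pZ_qZ_r$ in the context with $X$ at $p$ only, and $X_AX_BX_C$ in $XXX$. No term with exactly two $X$'s appears.

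\emph{Local bound.} The set of local correlations is the convex hull of deterministic assignments $z_p,x_p\in\{\pm1\}$, and $\mathcal I_{20}$ is linear in the correlators. It therefore suffices to maximize over deterministic strategies. Grouping the terms by $x_p$ gives
\begin{equation}
\mathcal I_{20}=S_{ZZ}+2z_Az_Bz_C+\sum_p x_p\kappa_p+4x_Ax_Bx_C,
\qquad \kappa_p=1+z_q+z_r-3z_qz_r ,
\end{equation}
where $\{q,r\}$ are the two parties other than $p$. The coefficient $\kappa_p$ takes only three values:
\begin{itemize}
\item $\kappa_p=0$ if $z_q=z_r=1$;
\item $\kappa_p=-4$ if $z_q=z_r=-1$;
\item $\kappa_p=4$ if $z_q\neq z_r$.
\end{itemize}
By permutation symmetry, only the number $k$ of parties with $z_p=-1$ matters. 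For each $k$ I maximize the cubic form $\sum_p x_p\kappa_p+4x_Ax_Bx_C$ over the eight sign choices of $x$.
\begin{itemize}
\item $k=0$: the $Z$ part is $3+2$, all $\kappa_p=0$, and the $x$ part is at most $4$. Total $9$.
\item $k=1$ (say $z_A=-1$): the $Z$ part is $-1-2$. Here $\kappa_A=0$ and $\kappa_B=\kappa_C=4$, so the $x$ part is at most $4+4+4=12$, attained with all $x_p=1$. Total $9$.
\item $k=2$ (say $z_C=1$): the $Z$ part is $-1+2$. Here $\kappa_A=\kappa_B=4$ and $\kappa_C=-4$. The three linear terms cannot all be $+4$ together with a positive product, so the $x$ part is at most $8$. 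Total $9$.
\item $k=3$: the $Z$ part is $3-2$ and all $\kappa_p=-4$. The maximum of $-4\sum_px_p+4\prod_px_p$ is $8$: for example $x_p=-1$ for all $p$ gives $12-4$. Total $9$.
\end{itemize}
Hence $\mathcal I_{20}\le 9$, and the bound is tight.

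\emph{Quantum value.} Substitute the target values of Table~\ref{tab:direct-data}. The terms $S_{ZZ}$ and $S_{ZZZ}$ vanish. The remaining terms give
\begin{equation}
S_X=\tfrac32,\qquad S_{XZ}=6\cdot\tfrac12=3,\qquad -3S_{XZZ}=-3\cdot\left(-\tfrac32\right)=\tfrac92,\qquad 4S_{XXX}=2 .
\end{equation}
Their sum is $\tfrac32+3+\tfrac92+2=11$.

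\emph{Main obstacle.} The step that needs care is the $k=2$ and $k=3$ cases. There the cubic term $4x_Ax_Bx_C$ competes with the linear terms, and the maximum is not obtained by aligning each $x_p$ with $\kappa_p$. I would settle these two cases by listing all eight $x$ assignments explicitly, so the value $8$ (not $12$) is checked directly.
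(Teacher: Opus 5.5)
Your proposal is correct and is essentially the paper's proof. Both arguments reduce to deterministic strategies and organize them by the number $k$ of negative $z$ outputs, using the same linear coefficients $\kappa_p=1+z_q+z_r-3z_qz_r\in\{0,\pm4\}$ (the paper's $(a_A,a_B,a_C)$); both confirm the maximum $9$ in every case and substitute the target correlators to get $11$. For the $k=2,3$ cases you flagged, the four $\pm4$ terms have sign product $-1$, so at most three can be positive, which gives the $x$-part value $8$ without listing assignments.
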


\begin{proof}
For deterministic outputs, let $k$ be the number of negative values among
$(z_A,z_B,z_C)$.  The part independent of the $x$ outputs is
\begin{equation}
C(z)=z_Az_B+z_Az_C+z_Bz_C+2z_Az_Bz_C,
\end{equation}
and the coefficient of $x_A$ is
$1+z_B+z_C-3z_Bz_C$, with cyclic analogues.  Up to a permutation of the
parties, the pairs consisting of $C(z)$ and the three linear coefficients
are
\begin{equation}
\begin{array}{c|c|c|c|c}
k&0&1&2&3\\ \hline
C(z)&5&-3&1&1\\
(a_A,a_B,a_C)&(0,0,0)&(0,4,4)&(4,4,-4)&(-4,-4,-4).
\end{array}
\end{equation}
Including the term $4x_Ax_Bx_C$, the possible values are respectively
$\{1,9\}$, $\{-15,-7,1,9\}$, $\{-7,9\}$, and $\{-7,9\}$.
Convexity proves the local bound.  The target values
$S_{ZZ}=S_{ZZZ}=0$, $S_X=3/2$, $S_{XZ}=3$,
$S_{XZZ}=-3/2$, and $S_{XXX}=1/2$ give $11$.
\end{proof}

Proposition~\ref{prop:five-context-bell-witness} witnesses strict
nonlocality.  We next ask whether the same canonical realization can
maximize a Bell expression with a strict local--quantum gap.  The theorem
below rules this out even when the expression uses all eight global $X/Z$
contexts, with no assumed permutation symmetry.

\begin{theorem}[No violated Bell inequality is maximized by the canonical
measurements]
\label{thm:xz-obstruction}
Let $p_{\rm CCZ}^{XZ}$ denote the correlations generated by $\Hccz$ with
$Z_p=\sigma_z$ and $X_p=\sigma_x$.  If this implementation attains the
largest quantum value $q$ of a Bell expression $\mathcal I$ in the tripartite
two-input, two-output scenario, then the local bound of $\mathcal I$ is also
$q$.  Equivalently, there exists a local correlation $p_L$ such that
\begin{equation}
\mathcal I(p_L)=\mathcal I(p_{\rm CCZ}^{XZ}).
\label{eq:obstruction-equal-score}
\end{equation}
Consequently, the canonical strategy cannot both attain the quantum maximum
of $\mathcal I$ and yield a value strictly above its local bound.
\end{theorem}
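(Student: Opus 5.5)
The plan is a convexity argument. If the canonical correlations can be written as
\begin{equation*}
p_{\rm CCZ}^{XZ}=\lambda\,p_L+(1-\lambda)\,p_Q,\qquad 0<\lambda\le1,
\end{equation*}
with $p_L$ local and $p_Q$ quantum, then linearity of $\mathcal I$ settles the claim. Since $q$ is the largest quantum value and both $p_L$ and $p_Q$ are quantum, we have $\mathcal I(p_L)\le q$ and $\mathcal I(p_Q)\le q$. The equality $\mathcal I(p_{\rm CCZ}^{XZ})=q$ then forces $\mathcal I(p_L)=q$, which is exactly Eq.~\eqref{eq:obstruction-equal-score}. The local bound is at most $q$ because local correlations are quantum, so it equals $q$.

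The whole theorem therefore reduces to one geometric fact: $p_{\rm CCZ}^{XZ}$ is not an exposed-from-nonlocality point. Concretely, one must exhibit a deterministic local point $p_L$ such that $p_{\rm CCZ}^{XZ}+\varepsilon\,(p_{\rm CCZ}^{XZ}-p_L)$ is still quantum for some $\varepsilon>0$.

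\textbf{Step 1: fix the data.} I would work in correlator coordinates. Corollary~\ref{cor:full-xz-correlations} and Table~\ref{tab:direct-data} give all twenty-six nonempty correlators of $p_{\rm CCZ}^{XZ}$: every $Z$-only moment vanishes, and every moment containing an $X$ has modulus $1/2$. A naive choice with $\lambda=1/2$, taking $p_L$ deterministic with $z_p=+1$, fails. The reason is that the pair $\expect{X_A}=1/2$, $\expect{X_AZ_BZ_C}=-1/2$ forces $p_Q$ to be inconsistent. So the decomposition must use a small weight $\lambda$.

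\textbf{Step 2: choose a candidate local point.} I would pick $p_L$ among the maximizers of the local expression relevant to the geometry. The natural first choice is a deterministic assignment compatible with the signs of the observed moments, for instance $x_p=+1$ together with a $z$ string, or a uniform mixture over $z$ strings (so that $Z$ moments stay zero) with $x_p$ chosen as $\operatorname{sign}$ of the conditional $X$-coefficients.

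\textbf{Step 3: show the outward direction stays quantum.} I would look for a one-parameter family of genuine quantum strategies $(\ket{\psi_t},Z_p(t),X_p(t))$ with $t\in(-\delta,\delta)$ and $p(0)=p_{\rm CCZ}^{XZ}$, whose derivative at $t=0$ is proportional to $p_{\rm CCZ}^{XZ}-p_L$. Quantum realizations near the canonical one include rotated measurement angles $X_p(t)=\cos t\,\sigma_x+\sin t\,\sigma_z$, deformations of the hyperedge phase $\ket{abc}\mapsto e^{i t\,abc}\ket{abc}$, and mixing with classical components. Once the tangent direction is realized, $p(-t)$ for small $t>0$ is quantum. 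Then $p_{\rm CCZ}^{XZ}$ lies strictly inside the segment from $p_L$ to $p(-t)$, up to second order.

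\textbf{Step 4: remove the second-order error.} To handle the curvature term, I would replace the curve by an exact convex combination. For example, I would mix the deformed strategy with the classical model of Appendix~\ref{app:direct-proof}, which reproduces the other nineteen correlators with $\expect{XXX}=0$. Direct sums of quantum strategies realize convex mixtures, so I expect an exact identity of the form
\begin{equation*}
p_{\rm CCZ}^{XZ}=\lambda p_L+\mu p_{\rm cl}+(1-\lambda-\mu)p_Q .
\end{equation*}
This is proved by solving a small linear system on the twenty-six correlators. The term $p_{\rm cl}$ can be absorbed into $p_L$, since it is local.

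\textbf{Main obstacle.} I expect Step 3 to be the hardest part: finding an explicit quantum realization $p_Q$, reached beyond $p_{\rm CCZ}^{XZ}$ in the direction away from a local point, that matches all twenty-six correlators simultaneously. My first attempt would be a symmetric ansatz. I would keep the state $\Hccz$, rotate every $X_p$ by a common small angle toward $\pm\sigma_z$, and combine two such rotated strategies with opposite angles. The first-order terms cancel, and the mixture equals $p_{\rm CCZ}^{XZ}$ shrunk by a factor $\cos t$ in the $X$-containing moments, plus cross terms. If those cross terms can be matched by a local point, the decomposition follows.

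If this fails, the fallback is numerical-then-exact. I would use NPA to certify that $p_{\rm CCZ}^{XZ}+\varepsilon(p_{\rm CCZ}^{XZ}-p_L)$ is quantum, then round to an explicit finite-dimensional strategy.
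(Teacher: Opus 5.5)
\textbf{Gap: the reduction targets a false statement.} Your reduction is valid as a sufficient condition, but the geometric fact it reduces to does not hold. So Steps 2--4 and the NPA-and-round fallback are aimed at an object that cannot exist.

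The point $p_{\rm CCZ}^{XZ}$ is an extreme point of the quantum set. To see this, suppose $p_{\rm CCZ}^{XZ}=\lambda p_1+(1-\lambda)p_2$ with $p_1,p_2$ quantum and $0<\lambda<1$. Realize the mixture as the block-diagonal direct sum of realizations of $p_1$ and $p_2$: the state is $\sqrt\lambda\ket{\psi_1}+\sqrt{1-\lambda}\ket{\psi_2}$ in orthogonal tensor blocks, with measurements $M_p^{(1)}\oplus M_p^{(2)}$.
\begin{itemize}
\item This realization reproduces the twenty correlators, so the branch-flip relations \eqref{eq:direct-branch-flips} hold for it.
\item The block projector of party $A$ commutes with $Z_A$, $X_A$ and with every operator of $B$ and $C$. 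Hence the flip relations hold separately inside each block.
\item Unitarity of the flips then makes the eight branch norms equal within each block. Flip signs and equal norms fix every correlator, so $p_1=p_2=p_{\rm CCZ}^{XZ}$.
\end{itemize}
Since $p_{\rm CCZ}^{XZ}$ is nonlocal by Proposition~\ref{prop:five-context-bell-witness}, there is no decomposition $p_{\rm CCZ}^{XZ}=\lambda p_L+(1-\lambda)p_Q$ with $\lambda>0$ and $p_L$ local. Equivalently, $p_{\rm CCZ}^{XZ}+\varepsilon(p_{\rm CCZ}^{XZ}-p_L)$ lies outside the quantum set for every local $p_L$ and every $\varepsilon>0$. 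The second-order term you plan to cancel in Step~4 is exactly the curvature that extremality says can never be cancelled.

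\textbf{Missing idea: first-order stationarity is enough.} The theorem concerns only \emph{supporting} Bell expressions, so no exact decomposition is needed. The germ is already in your Step~3. Suppose $\tau\mapsto p(\tau)$ is a differentiable curve of quantum correlations on an open interval around $0$, with $p(0)=p_{\rm CCZ}^{XZ}$ and $\dot p(0)=c\,(p_L-p_{\rm CCZ}^{XZ})$ for some $c\neq0$. For any $\mathcal I$ maximized at $p_{\rm CCZ}^{XZ}$, the function $\tau\mapsto\mathcal I(p(\tau))$ has an interior maximum at $0$, so its derivative vanishes. Affinity of $\mathcal I$ then gives $\mathcal I(p_L)=\mathcal I(p_{\rm CCZ}^{XZ})$. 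The direction toward $p_L$ is tangent to the boundary without being a feasible direction, which is exactly what extremality permits.

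\textbf{How the paper does it.} The paper implements this in dual form:
\begin{itemize}
\item Fixing $\sigma_x,\sigma_z$ and varying only the state forces $\Hccz$ to be an eigenvector of the Bell operator. This gives conditions \eqref{eq:obstruction-eigen-conditions} on permutation-invariant coefficients.
\item The explicit local mixture with orbit vector \eqref{eq:obstruction-local-vector} satisfies $\bm\beta\cdot(\mathbf p_L-\mathbf p_{\rm CCZ})=0$ under those conditions.
\item Averaging over party permutations handles a general $\mathcal I$.
\end{itemize}

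\textbf{The curve you were looking for.} It is a state perturbation, not a rotation of the $X_p$. With $\ket{s_w}$ the Hamming-weight vectors, take canonical measurements and
\[
\ket{\psi_\tau}\propto\Hccz+\tau\ket\phi,\qquad \ket\phi=\tfrac{\sqrt2}{8}\bigl(\ket{s_0}+\tfrac13\ket{s_1}-\tfrac13\ket{s_2}+\ket{s_3}\bigr),
\]
where $\ket\phi$ is orthogonal to $\Hccz$. This curve has $\dot p(0)=p_L-p_{\rm CCZ}^{XZ}$ exactly, for the paper's $p_L$. Both vectors are permutation symmetric, so they agree in all twenty-six coordinates, and you could even skip the symmetrization step.

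As written, your proposal identifies neither the local point nor the curve, and its final step cannot succeed.
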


Together with Theorem~\ref{thm:direct-ccz} and
Proposition~\ref{prop:five-context-bell-witness}, this result shows that the
canonical correlations are nonlocal and self-testing from a collection of
correlators, while every Bell expression they maximize has zero
local--quantum gap.  Its convex-geometric content concerns supporting Bell
expressions at the canonical realization and is narrower than non-exposedness
of the correlation point
\cite{GohEtAl2018Geometry,ChenEtAl2023Nonexposed}.

We outline the proof, with all orbit calculations in
Appendix~\ref{app:obstruction-proof}.  First suppose that $\mathcal I$ is
permutation invariant.  Up to a constant it is a linear combination of nine
orbit sums,
\begin{equation}
\mathbf S=(S_X,S_Z,S_{XX},S_{XZ},S_{ZZ},S_{XXX},S_{XXZ},S_{XZZ},S_{ZZZ}).
\end{equation}
Write its coefficient vector as
$\bm\beta=(a,b,c,d,e,f,g,h,i)$.  The orbit vector of the canonical CCZ
correlations is
\begin{equation}
\mathbf p_{\rm CCZ}
=\left(\frac32,0,\frac32,3,0,\frac12,\frac32,-\frac32,0\right).
\label{eq:ccz-xz-orbit-vector}
\end{equation}
If the canonical realization is a quantum maximizer, then fixing the
measurements and varying only the state shows that $\Hccz$ must be a maximal
eigenvector of the associated Bell operator.  Direct expansion in the
permutation-symmetric Hamming-weight basis gives three necessary linear
conditions,
\begin{align}
a-2d-2e+f-h-\frac23i&=0,\notag\\
b+2d+2e-f+g+2h+\frac13i&=0,\notag\\
c-g+\frac23i&=0.
\label{eq:obstruction-eigen-conditions}
\end{align}

Now consider the local correlation obtained by symmetrizing three deterministic
assignments and mixing them with weights $1/4,1/2,1/4$, respectively.  Its
orbit vector is
\begin{equation}
\mathbf p_L
=\left(\frac52,1,2,3,0,\frac12,2,-\frac12,0\right).
\label{eq:obstruction-local-vector}
\end{equation}
The difference obeys
\begin{equation}
\bm\beta\cdot(\mathbf p_L-\mathbf p_{\rm CCZ})
=a+b+\frac12c+\frac12g+h=0,
\label{eq:obstruction-dot-zero}
\end{equation}
where the last equality follows from
Eq.~\eqref{eq:obstruction-eigen-conditions}.  Thus the local and canonical
CCZ correlations have equal value for every permutation-invariant Bell
expression compatible with the required eigenstate condition.

Finally, let $\mathcal I$ be arbitrary and average it over all six party
permutations.  The canonical correlations are symmetric, and the quantum and
local sets are invariant under relabeling.  If the canonical implementation
maximizes $\mathcal I$, it also maximizes the symmetrized expression.  The
local bound of the symmetrized expression cannot exceed that of the original
one.
If $q$ is the canonical quantum value, and $\overline L$ and $L$ are the
local bounds of the averaged and original expressions, respectively, then
Eq.~\eqref{eq:obstruction-dot-zero} and inclusion of the local set in the
quantum set give
\begin{equation}
q\leq\overline L\leq L\leq q.
\label{eq:obstruction-bound-chain}
\end{equation}
Thus both local bounds equal $q$, and a deterministic local maximizer exists
because the local set is a finite polytope.  This proves
Theorem~\ref{thm:xz-obstruction} without assuming symmetry of the original
Bell expression.

Inequivalent or rotated two-setting measurements remain outside
Theorem~\ref{thm:xz-obstruction}.  The next section adds an independent third
measurement and constructs a Bell inequality that is maximally violated by
the target state and measurements.

\section{Self-testing from maximal violation with a third measurement}
\label{sec:bell}

Each party now has three independent reflections $Z_p,X_p,D_p$.  To define
the Bell expression compactly, $S_{\mu_1\cdots\mu_k}$ denotes the sum over
all distinct assignments of the displayed measurements to distinct parties.
For example,
\begin{align}
S_X&=X_A+X_B+X_C,\notag\\
S_{DX}&=D_AX_B+X_AD_B+D_AX_C+X_AD_C+D_BX_C+X_BD_C,\notag\\
S_{XZZ}&=X_AZ_BZ_C+Z_AX_BZ_C+Z_AZ_BX_C.
\end{align}

\subsection{Construction of the Bell inequality}

Bell inequalities tailored to a multipartite target were previously
constructed for the $W$ state by imposing eigenstate and measurement-angle
stationarity conditions, followed by a SWAP-based fidelity analysis
\cite{Pal2014Tomography}.  The construction here starts instead from the
state-dependent relations needed by the SWAP proof.
For words of degree at most two, the target state satisfies the
linear $D$--$X$--$Z$ relation, two ordered $DX$ and $DZ$ relations for each
party, and the three generalized hypergraph stabilizers.  The coefficient
vectors of these relations lie in the kernel of the target word matrix.  We
searched for a positive Gram
form supported on this kernel whose polynomial expansion contains only
permutation-invariant Bell terms.  The search was a feasibility calculation:
it sought a positive-definite Gram matrix in the feasible affine space whose
entries could be reconstructed in $\mathbb Q(\sqrt2)$ and did not optimize
the white-noise threshold.

Permutation symmetry leaves nine orbit types.  A floating-point
semidefinite program located an interior point of the corresponding affine
space; its Bell coefficients and Gram matrix were then reconstructed in
$\mathbb Q(\sqrt2)$ and checked using exact arithmetic.  This
relation-guided construction follows the custom formal-SOS approach of
Ref.~\cite{Barizien2024CustomBell}.

\subsection{Local and quantum bounds}

The nine orbit coefficients and their expanded term counts are shown in
Table~\ref{tab:bell-orbits}.

\begin{table}[H]
\caption{Orbit data for the Bell inequality.  The target value is the
expectation of the complete orbit sum for $\Hccz$ with the reference
measurements in Eqs.~\eqref{eq:reference-xz} and~\eqref{eq:reference-d}.}
\label{tab:bell-orbits}
\centering
\small
\begin{tabular}{lccrr}
\toprule
Orbit & Multiplicity & Coefficient & Target value & Contribution \\
\midrule
$S_X$ & 3 & $12+10\sqrt2$ & $3/2$ & $18+15\sqrt2$ \\
$S_{DX}$ & 6 & $35-6\sqrt2$ & $3\sqrt2$ & $105\sqrt2-36$ \\
$S_{XX}$ & 3 & $-20\sqrt2$ & $3/2$ & $-30\sqrt2$ \\
$S_{XZ}$ & 6 & $15+5\sqrt2$ & $3$ & $45+15\sqrt2$ \\
$S_{DXZ}$ & 6 & $-5$ & $0$ & $0$ \\
$S_{DZZ}$ & 3 & $-20$ & $-3\sqrt2/4$ & $15\sqrt2$ \\
$S_{XXX}$ & 1 & $12$ & $1/2$ & $6$ \\
$S_{XZZ}$ & 3 & $-6$ & $-3/2$ & $9$ \\
$S_{ZZZ}$ & 1 & $18$ & $0$ & $0$ \\
\midrule
Total & 32 & & & $42+120\sqrt2$ \\
\bottomrule
\end{tabular}
\end{table}

Explicitly,
\begin{align}
\Bell={}&(12+10\sqrt2)S_X+(35-6\sqrt2)S_{DX}
-20\sqrt2S_{XX}\notag\\
&+(15+5\sqrt2)S_{XZ}-5S_{DXZ}-20S_{DZZ}\notag\\
&+12S_{XXX}-6S_{XZZ}+18S_{ZZZ}.
\label{eq:final-bell-functional}
\end{align}
Enumeration of the $2^9=512$ deterministic assignments gives
\begin{equation}
\beta_L=258-36\sqrt2.
\label{eq:exact-local-bound}
\end{equation}
The all-$+1$ assignment is the unique deterministic maximizer.  In the
reference realization of Eqs.~\eqref{eq:reference-xz} and
\eqref{eq:reference-d}, direct algebra gives
\begin{equation}
\Bell\Hccz=(42+120\sqrt2)\Hccz.
\label{eq:target-bell-eigenstate}
\end{equation}

\begin{theorem}[Bell self-test at maximal violation]
\label{thm:bell-ccz}
In the three-party reflection model of Eq.~\eqref{eq:black-box-model}, the
quantum maximum of Eq.~\eqref{eq:final-bell-functional} is
\begin{equation}
\beta_Q=42+120\sqrt2.
\label{eq:exact-quantum-bound}
\end{equation}
If a realization attains this value, the isometry
Eq.~\eqref{eq:canonical-isometry} extracts the CCZ state and reproduces the
action of the three reference measurements on the state:
\begin{align}
\Phi\ket\psi&=\Hccz\otimes\ket{\mathrm{junk}},\notag\\
\Phi Z_p\ket\psi&=\sigma_z^{(p)}\Hccz\otimes\ket{\mathrm{junk}},\notag\\
\Phi X_p\ket\psi&=\sigma_x^{(p)}\Hccz\otimes\ket{\mathrm{junk}},\notag\\
\Phi D_p\ket\psi&=\frac{\sigma_x^{(p)}+\sigma_z^{(p)}}{\sqrt2}
\Hccz\otimes\ket{\mathrm{junk}}.
\label{eq:bell-measurement-extraction}
\end{align}
\end{theorem}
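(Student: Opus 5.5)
The plan has two parts: first prove the bound $\beta_Q$ with an exact sum-of-squares (SOS) certificate, then read off state-dependent operator relations from its equality conditions and reduce to the mechanism already used in Theorem~\ref{thm:direct-ccz}.

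\textbf{Upper bound.} Let $\mathbf w$ be the vector of words of degree at most two in the reflections $\{\id,Z_p,X_p,D_p\}$ that appear in the construction. Following the relation-guided search described above, I will exhibit a positive semidefinite Gram matrix $G$ with entries in $\field$ such that
\begin{equation}
\beta_Q\id-\Bell=\mathbf w^\dagger G\,\mathbf w=\sum_k R_k^\dagger R_k
\end{equation}
holds as an identity in the free algebra modulo Eq.~\eqref{eq:black-box-model}. Here the $R_k$ are obtained from an exact $LDL^\dagger$ factorization of $G$. The identity is checked coefficient by coefficient. Every diagonal monomial reduces using $M^2=\id$, and cross-party products commute, so exact arithmetic over $\field$ suffices. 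Positivity of $G$ gives $\bra\psi\Bell\ket\psi\le\beta_Q$ for every realization. Attainment follows from Eq.~\eqref{eq:target-bell-eigenstate}.

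\textbf{Equality conditions.} Suppose $\bra\psi\Bell\ket\psi=\beta_Q$. Then $R_k\ket\psi=0$ for every $k$. Because the rows of $G$ span the kernel of the target word matrix, taking linear combinations of the $R_k$ recovers each generating relation. Specifically, I expect to obtain:
\begin{itemize}
\item[(i)] the three generalized stabilizer relations $X_p\ket\psi=C_{qr}\ket\psi$, as in Eq.~\eqref{eq:direct-stabilizer-a};
\item[(ii)] the linear relation $\sqrt2 D_p\ket\psi=(X_p+Z_p)\ket\psi$;
\item[(iii)] the ordered relations coming from $DX$ and $DZ$, which for the reference realization read $D_pX_p\ket\psi=\tfrac1{\sqrt2}(\id+Z_pX_p)\ket\psi$ and $D_pZ_p\ket\psi=\tfrac1{\sqrt2}(\id+X_pZ_p)\ket\psi$.
\end{itemize}
From (ii) and (iii) I will derive local anticommutation on the state. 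Multiplying (ii) on the left by $D_p$ and using $D_p^2=\id$ gives $\sqrt2\ket\psi=(D_pX_p+D_pZ_p)\ket\psi$. Substituting (iii) then yields $\{X_p,Z_p\}\ket\psi=0$. If the SOS instead only supplies combinations of these relations, the same manipulation should still isolate $\{X_p,Z_p\}\ket\psi=0$, possibly after applying operators of the other parties, which commute with $X_p,Z_p$.

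\textbf{Extraction.} With anticommutation in hand, I need the uniform branch norms $\norm{\psi_{abc}}^2=1/8$. Any $Z$ moment $\expect{Z_S}$ that involves party $p$ vanishes, because $Z_p\ket\psi=-X_pZ_pX_p\ket\psi$ up to operators commuting past $X_p$. This requires the state-dependent relation $X_p\ket\psi=C_{qr}\ket\psi$, so I will handle it carefully in the form
\begin{equation}
\bra\psi Z_S\ket\psi=\bra\psi X_p Z_S X_p\ket\psi=-\bra\psi Z_S\ket\psi,
\end{equation}
where the first equality uses that $X_p\ket\psi=C_{qr}\ket\psi$ and that $C_{qr}$ commutes with $Z_S$ and squares to $\id$. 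From the stabilizers and anticommutation, the branch flips of Eq.~\eqref{eq:direct-branch-flips} follow directly: $X_p$ maps $P^p_a$ to $P^p_{1-a}$ on the state, and the sign $(-1)^{bc}$ comes from $C_{BC}$. This bypasses the one-square $XXX$ argument, since anticommutation is now available on the whole state and not only on the $111$-adjacent branch. The SWAP expansion then gives $\Phi\ket\psi=\Hccz\otimes\sqrt8\ket{\psi_{000}}$, together with the $Z_p$ and $X_p$ extractions, exactly as in the proof of Theorem~\ref{thm:direct-ccz}. Finally, (ii) gives the $D_p$ line of Eq.~\eqref{eq:bell-measurement-extraction} by linearity.

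\textbf{Main obstacle.} The hard step is producing the exact rational-in-$\sqrt2$ Gram matrix and certifying that it is positive semidefinite. Closely tied to this is showing that its kernel-supported range really generates relations (i)--(iii), rather than only some combination too weak to yield $\{X_p,Z_p\}\ket\psi=0$. I would first try a block decomposition of $G$ by permutation symmetry to make both checks tractable.
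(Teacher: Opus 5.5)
Your proposal is correct and follows essentially the paper's route. The paper also uses an exact SOS certificate whose Gram matrix is strictly positive on the full $47$-dimensional kernel of $W_\star$; this, rather than mere positive semidefiniteness, is what forces each of relations (i)--(iii) separately. It then left-multiplies the calibration relation by $D_p$ to obtain $\{X_p,Z_p\}\ket\psi=0$, combines this with the stabilizers (left-multiplied by the other parties' projectors) to get the branch flips, and finishes with the SWAP expansion. The only variation is the uniform branch norms: you derive them from $\expect{Z_S}=\bra\psi X_pZ_SX_p\ket\psi=-\expect{Z_S}$, whereas the paper obtains them from unitarity of the flips on the connected branch cube, and both arguments are valid.
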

\FloatBarrier

\subsection{Equality conditions and state extraction}

The Bell gap is strictly positive,
\begin{equation}
\beta_Q-\beta_L=156\sqrt2-216>0.
\end{equation}
The upper bound in Theorem~\ref{thm:bell-ccz} is established by an exact
computer-assisted certificate.  Let $v_2$ be
the column of all reduced noncommutative words of degree at most two.  There
are 55 such words.  Their ideal action on the unnormalized CCZ vector defines
an $8\times55$ matrix $W_\star$ over $\field$ with
\begin{equation}
\rank W_\star=8,\qquad \dim\ker W_\star=47.
\label{eq:ideal-word-rank-main}
\end{equation}
Choose a full-column-rank matrix
$R\in\field^{55\times47}$ whose range is this kernel.  The certificate
contains a positive-definite matrix $S\in\field^{47\times47}$ such that the
formal identity
\begin{equation}
\beta_Q\id-\Bell=v_2^\dagger RSR^Tv_2
\label{eq:exact-sos-main}
\end{equation}
holds in the quotient algebra generated by the reflection and cross-party
commutation relations.  All 784 canonical-word coefficients are matched
without approximation.  Positive definiteness of $S$ is certified exactly
by a rational congruence and a Gershgorin bound.

Saturation of Eq.~\eqref{eq:exact-sos-main} forces the complete target
relation space of degree at most two on the unknown state.  In particular,
for each party $p$ it implies the three state-dependent relations
\begin{align}
(\sqrt2D_p-X_p-Z_p)\ket\psi&=0,
\label{eq:calibration-main}\\
(\sqrt2D_pX_p-\id-Z_pX_p)\ket\psi&=0,
\label{eq:dx-main}\\
(\sqrt2D_pZ_p-X_pZ_p-\id)\ket\psi&=0,
\label{eq:dz-main}
\end{align}
as well as the hypergraph stabilizer relations
\begin{equation}
(2X_p-\id-Z_q-Z_r+Z_qZ_r)\ket\psi=0,
\qquad \{p,q,r\}=\{A,B,C\}.
\label{eq:bell-stabilizer-main}
\end{equation}
Equation~\eqref{eq:calibration-main} is a relation on $\ket\psi$, not a
device-side operator identity.  It cannot be multiplied on
the right by an unknown same-party operator.  Instead, left-multiplying by
$D_p$ and independently using Eqs.~\eqref{eq:dx-main}--\eqref{eq:dz-main}
gives the legal state-dependent consequence
\begin{equation}
\{X_p,Z_p\}\ket\psi=0.
\label{eq:bell-anticommutation-main}
\end{equation}

Equations~\eqref{eq:bell-stabilizer-main} and
\eqref{eq:bell-anticommutation-main} yield the same branch-flip rules as
Eq.~\eqref{eq:direct-branch-flips}.  Here their unitarity also forces all
branch norms to be equal.  The canonical isometry therefore extracts
$\Hccz$, and Eq.~\eqref{eq:bell-measurement-extraction} follows from the
first relation above.  Appendices~\ref{app:bell-algebra}--
\ref{app:equality-swap} separate the quotient algebra, positivity
verification, and analytic equality-to-isometry argument.  The SOS identity
proves the quantum upper bound by exact computer-assisted algebra, while the
derivation of the SWAP isometry from its equality conditions is analytic.

\begin{table}[t]
\caption{Dimensions entering the SOS proof and the certified spectral bound.}
\label{tab:certificate-audit}
\centering
\begin{tabular}{lr}
\toprule
Quantity & Value \\
\midrule
Dimension of $v_2$ & 55 \\
$\rank W_\star$ & 8 \\
$\dim\ker W_\star$ & 47 \\
Canonical-word coefficients & 784 \\
Rigorous lower bound on $\lambda_{\min}(S)$ & $>1/7000$ \\
\bottomrule
\end{tabular}
\end{table}
\FloatBarrier

The SOS proof follows the general strategy of extracting self-testing
relations from a Bell operator \cite{BampsPironio2015SOS}.  The present
construction uses a finite algebraic certificate in the custom formal-SOS
framework of Ref.~\cite{Barizien2024CustomBell}.

\section{Bounds for imperfect data}
\label{sec:quantitative}

The following estimates quantify deviations of selected operator relations
away from ideal data.  A fidelity bound for the extracted state would
additionally require simultaneous control of all noisy correlators.

For the analytic proof, suppose that the seven pure-$Z$ moments and the twelve
single-$X$ moments in Table~\ref{tab:direct-data} retain their target values,
while
\begin{equation}
T=\expect{X_AX_BX_C}=\frac12-\delta.
\end{equation}
The norm-square identity in Eq.~\eqref{eq:one-square-main} then gives, for
each party $p$ and the other two parties $q,r$,
\begin{equation}
\norm{\{X_p,Z_p\}P_1^qP_1^r\ket\psi}=\sqrt{2\delta}.
\label{eq:direct-local-defect}
\end{equation}
This identity gives the local defect in the doubly conditioned sector.
Extending Eq.~\eqref{eq:direct-local-defect} to jointly noisy data requires
simultaneous control of the branch weights, stabilizer relations, and cube
propagation.

For the Bell inequality, set
\begin{equation}
\varepsilon:=\beta_Q-\expect{\Bell}\geq0,
\qquad
\ket y:=R^Tv_2\ket\psi.
\end{equation}
Taking the expectation of the SOS identity gives
\begin{equation}
\varepsilon=\bra yS\ket y
\geq\lambda_{\min}(S)\norm{y}^2,
\end{equation}
and hence
\begin{equation}
\norm{R^Tv_2\ket\psi}
\leq\sqrt{\frac{\varepsilon}{\lambda_{\min}(S)}}.
\label{eq:relation-space-defect}
\end{equation}
Here $\lambda_{\min}(S)>0$ depends on the certificate.  The rational
congruence in
Appendix~\ref{app:exact-sos} gives $\lambda_{\min}(S)>1/7000$, and therefore
\begin{equation}
\norm{R^Tv_2\ket\psi}\leq\sqrt{7000\,\varepsilon}.
\label{eq:relation-space-defect-explicit}
\end{equation}
Equation~\eqref{eq:relation-space-defect-explicit} controls the complete ideal
degree-two relation vector.  Turning this bound into a robust extraction
theorem requires propagating the vector errors through conditioning, branch
normalization, and the local isometry.

Every term in the target Bell operator is traceless.  With the trusted target
measurements and the white-noise family
\begin{equation}
\rho_v=v\ket{H_3}\!\bra{H_3}+(1-v)\frac{\id}{8},
\end{equation}
the Bell value is $v\beta_Q$.  Strict violation of the local bound therefore
requires
\begin{equation}
v>\frac{\beta_L}{\beta_Q}
=\frac{258-36\sqrt2}{42+120\sqrt2}
\simeq0.97819.
\label{eq:white-noise-threshold}
\end{equation}
Equation~\eqref{eq:white-noise-threshold} gives the white-noise threshold for
Bell violation in this model.  Its magnitude reflects the
certificate-oriented construction; noise tolerance was not an optimization
objective.

\section{Discussion}
\label{sec:discussion}

The CCZ state separates two certification tasks: self-testing from a
collection of correlator equalities and self-testing at the maximum of a
single Bell inequality.  Twenty $X/Z$ correlators drawn from five global
contexts accomplish the first task with two binary inputs per party.  These
correlations are nonlocal, yet Theorem~\ref{thm:xz-obstruction} shows that
every Bell expression maximized by the canonical measurements has zero
local--quantum gap.  The third input used in Theorem~\ref{thm:bell-ccz}
provides additional state-dependent relations and yields both a strict gap
and self-testing from maximal violation.

The branch-cube argument gives a direct explanation of how the non-Pauli
phase is recovered.  For the Boolean phase $f(a,b,c)=abc$, flipping the first
bit changes the sign by $(-1)^{bc}$, with cyclic analogues for the other two
bits.  The generalized stabilizers determine these conditional flip signs,
while the $XXX$ equality fixes the three negative edges incident on $111$.
Commutation between different parties then propagates the remaining edges.
This mechanism differs from a graph-state proof, where Pauli stabilizers
already contain all required signs as linear operator relations.

The two proofs use different information.  The analytic $X/Z$ argument
reconstructs the cubic phase from conditional bit-flip signs, without
assuming qubit dimension or one-dimensional computational branches.  The
Bell construction instead promotes the state-dependent relations needed for
extraction to equality conditions of one Bell inequality.
Theorem~\ref{thm:xz-obstruction} explains why the latter task cannot use the
canonical $X/Z$ realization.  Relative to universal multipartite
constructions such as Ref.~\cite{BalanzoJuando2026AllPure}, specializing to
CCZ trades generality for two inputs per party, five global contexts, and a
direct account of how the cubic phase is recovered.

These mechanisms suggest a question for phase-polynomial states
\begin{equation}
\ket{\psi_f}=2^{-n/2}\sum_{x\in\{0,1\}^n}(-1)^{f(x)}\ket{x}.
\end{equation}
An analogous proof would need uniform computational branches, conditional
relations encoding the Boolean derivatives of $f$, enough phase-sensitive
seed edges, and a connected propagation graph.  Whether these conditions
suffice beyond the three-qubit CCZ state remains open.  A
relation-guided SOS construction for a larger state would also have to
produce a Bell operator with a strict local--quantum gap and an equality space
strong enough for extraction.

The CCZ state is both the smallest rank-three hypergraph state and an
elementary entangled magic resource, making it a concrete benchmark for
device-independent certification beyond stabilizer states.  Several resource
and robustness questions remain.  We do not know whether fewer than twenty
correlators suffice.  Theorem~\ref{thm:xz-obstruction} applies only to the
canonical Pauli $X/Z$ measurements and leaves inequivalent two-setting
realizations unresolved.  The final Bell inequality has a white-noise
violation threshold of approximately $97.819\%$, and a fidelity bound for
jointly noisy data remains to be derived.  Improving the noise tolerance,
propagating the relation-space estimates to an extracted-state bound, and
identifying when branch propagation extends to higher-degree phase
polynomials are natural next steps.  Taken together, the results show that
the CCZ cubic phase can be certified device-independently with two local
inputs per party, even though the canonical $X/Z$ realization cannot attain
a strictly Bell-violating quantum maximum; an explicit third-input
construction achieves self-testing from maximal violation.

\section*{Acknowledgments}

This work was supported by the National Natural Science Foundation of China
(Grant No.~62201252), the Fundamental Research Funds for the Central
Universities (Grant No.~NS2025030), and the Beijing Natural Science Foundation
(Grant No.~4262015).

\section*{Code and data availability}

The code, certificates, and reproduction instructions are available at
\href{https://github.com/brant-fudan/self-testing-ccz-state}
{github.com/brant-fudan/self-testing-ccz-state}.

\section*{Author contributions}

Y.H. conceived and supervised the project, developed the methodology, and led
the analysis and writing.  X.K. and X.B. contributed to formal analysis,
exact and numerical calculations, validation, and manuscript revision; X.B.
also prepared the visualizations.  Y.W. contributed to conceptual development,
methodology, validation, supervision, and manuscript revision.  All authors
reviewed and approved the manuscript.

\clearpage

\appendix
\section{Proof of the twenty-correlator self-test}
\label{app:direct-proof}

This appendix gives an analytic proof of Theorem~\ref{thm:direct-ccz}.
Throughout, $\ket\psi$ is normalized and the unknown $Z_p,X_p$ satisfy
Eq.~\eqref{eq:black-box-model}.

\subsection{Uniform computational branches}

Expanding the projectors in Eq.~\eqref{eq:direct-branches} gives
\begin{equation}
\norm{\psi_{abc}}^2
=\frac18\sum_{S\subseteq\{A,B,C\}}
(-1)^{\sum_{p\in S}a_p}\expect{\prod_{p\in S}Z_p},
\label{eq:branch-fourier}
\end{equation}
where the empty product contributes one.  The first row of
Table~\ref{tab:direct-data} sets all seven nonconstant Fourier coefficients
to zero, proving Eq.~\eqref{eq:direct-uniform-branches}.  We write
$r=1/\sqrt8$ for the common branch norm.

\subsection{Saturated generalized stabilizers}

The operator $C_{BC}=\id-2P_1^BP_1^C$ is a Hermitian reflection.  Since it
commutes with $X_A$, so is $S_A=X_AC_{BC}$.  The four moments with one $X_A$
give
\begin{align}
\expect{S_A}
&=\frac12\left(
\expect{X_A}+\expect{X_AZ_B}+\expect{X_AZ_C}
-\expect{X_AZ_BZ_C}\right)=1.
\end{align}
For a reflection, unit expectation is equivalent to zero variance; hence
$S_A\ket\psi=\ket\psi$ and
Eq.~\eqref{eq:direct-stabilizer-a} follows.  Cycling the parties gives
\begin{equation}
X_A\ket\psi=C_{BC}\ket\psi,\quad
X_B\ket\psi=C_{AC}\ket\psi,\quad
X_C\ket\psi=C_{AB}\ket\psi.
\label{eq:all-direct-stabilizers}
\end{equation}
Projecting the first relation onto fixed $B,C$ sectors gives
\begin{equation}
X_A(\ket{\psi_{0bc}}+\ket{\psi_{1bc}})
=(-1)^{bc}(\ket{\psi_{0bc}}+\ket{\psi_{1bc}}),
\label{eq:sector-stabilizer-a}
\end{equation}
with cyclic analogues.

\subsection{The one-square identity}

For the middle party $B$, define
\begin{equation}
\ket{\eta_{ac}^{(B)}}=P_a^AP_c^C\ket\psi.
\end{equation}
The $B$ stabilizer in Eq.~\eqref{eq:all-direct-stabilizers} and the uniform
branches give
\begin{equation}
X_B\ket{\eta_{ac}^{(B)}}=(-1)^{ac}\ket{\eta_{ac}^{(B)}},
\quad
\norm{\eta_{ac}^{(B)}}^2=\frac14,
\quad
\bra{\eta_{ac}^{(B)}}Z_B\ket{\eta_{ac}^{(B)}}=0.
\label{eq:eta-properties}
\end{equation}
In $T=\expect{X_AX_BX_C}$, use the $A$ relation on the bra and the $C$
relation on the ket.  Since $C_{BC}=Z_B^c$ in the $C=c$ sector and
$C_{AB}=Z_B^a$ in the $A=a$ sector,
\begin{equation}
T=\sum_{a,c\in\{0,1\}}
\bra{\eta_{ac}^{(B)}}Z_B^cX_BZ_B^a\ket{\eta_{ac}^{(B)}}.
\label{eq:t-sector-sum}
\end{equation}
The four terms are $1/4,0,0$, and
$q_B=\bra{\eta_{11}^{(B)}}Z_BX_BZ_B\ket{\eta_{11}^{(B)}}$.  Set
$R_B=Z_BX_BZ_B$.  Because $R_B$ is a reflection and
$X_B\ket{\eta_{11}^{(B)}}=-\ket{\eta_{11}^{(B)}}$,
\begin{align}
\frac12-T
&=\frac12\norm{(\id-R_B)\ket{\eta_{11}^{(B)}}}^2\notag\\
&=\frac12\norm{\{X_B,Z_B\}\ket{\eta_{11}^{(B)}}}^2,
\end{align}
which is Eq.~\eqref{eq:one-square-main}.  At $T=1/2$ the norm vanishes.
Resolving the two $B$ branches yields
\begin{equation}
X_B\ket{\psi_{101}}=-\ket{\psi_{111}},\qquad
X_B\ket{\psi_{111}}=-\ket{\psi_{101}}.
\label{eq:b-top-edge}
\end{equation}
Using $A$ and $C$ as the middle party gives
\begin{align}
X_A\ket{\psi_{011}}&=-\ket{\psi_{111}},&
X_A\ket{\psi_{111}}&=-\ket{\psi_{011}},\notag\\
X_C\ket{\psi_{110}}&=-\ket{\psi_{111}},&
X_C\ket{\psi_{111}}&=-\ket{\psi_{110}}.
\label{eq:other-top-edges}
\end{align}

\subsection{Propagation over the branch cube}

Consider the face $c=1$.  Cross-party commutation applied to the top vertex
gives
\begin{equation}
X_A\ket{\psi_{101}}=X_B\ket{\psi_{011}}
=:\ket{w_{001}}.
\label{eq:w001}
\end{equation}
The first expression places $\ket{w_{001}}$ in the $B=0,C=1$ sector and the
second in the $A=0,C=1$ sector.  Hence it lies in the $001$ branch and has
norm $r$.  Equation~\eqref{eq:sector-stabilizer-a} in the $bc=01$ sector
gives
\begin{equation}
X_A\ket{\psi_{001}}
=\ket{\psi_{101}}+(\ket{\psi_{001}}-\ket{w_{001}}).
\end{equation}
The two terms on the right have orthogonal $A$ support.  Unitarity of $X_A$
and the equal branch norms therefore imply
\begin{equation}
r^2=r^2+\norm{\psi_{001}-w_{001}}^2,
\end{equation}
so $\ket{w_{001}}=\ket{\psi_{001}}$.  Repeating this argument on the $b=1$
and $a=1$ faces gives
\begin{align}
X_A\ket{\psi_{101}}&=\ket{\psi_{001}},&
X_B\ket{\psi_{011}}&=\ket{\psi_{001}},\notag\\
X_A\ket{\psi_{110}}&=\ket{\psi_{010}},&
X_C\ket{\psi_{011}}&=\ket{\psi_{010}},\notag\\
X_B\ket{\psi_{110}}&=\ket{\psi_{100}},&
X_C\ket{\psi_{101}}&=\ket{\psi_{100}},
\label{eq:middle-edge-propagation}
\end{align}
together with the reverse directions.

For the bottom face, commutation on $\ket{\psi_{100}}$ gives
$X_A\ket{\psi_{100}}=X_B\ket{\psi_{010}}=: \ket{w_{000}}$.  The same
support and norm argument identifies $\ket{w_{000}}=\ket{\psi_{000}}$.
The last edge follows explicitly from
\begin{equation}
X_C\ket{\psi_{001}}
=X_CX_A\ket{\psi_{101}}
=X_AX_C\ket{\psi_{101}}
=X_A\ket{\psi_{100}}
=\ket{\psi_{000}}.
\end{equation}
This proves all three relations in
Eq.~\eqref{eq:direct-branch-flips}.  Adding them over the eight
branches yields
\begin{equation}
\{X_p,Z_p\}\ket\psi=0\qquad(p=A,B,C).
\label{eq:direct-global-anti}
\end{equation}

\subsection{State and measurement extraction}

Using Eq.~\eqref{eq:canonical-isometry} for all parties gives
\begin{equation}
\Phi\ket\psi=\sum_{a,b,c}\ket{abc}\otimes
X_A^aX_B^bX_C^c\ket{\psi_{abc}}.
\end{equation}
The branch flips reduce the corrected vector to
$(-1)^{abc}\ket{\psi_{000}}$, so
\begin{equation}
\Phi\ket\psi
=\frac1{\sqrt8}\sum_{a,b,c}(-1)^{abc}\ket{abc}
\otimes\sqrt8\ket{\psi_{000}}.
\end{equation}
This is Eq.~\eqref{eq:direct-state-extraction}.  The $Z$ action follows
directly from $Z_pP_a^p=(-1)^aP_a^p$.  Equation~\eqref{eq:direct-global-anti}
implies
\begin{equation}
P_0^pX_p\ket\psi=X_pP_1^p\ket\psi,\qquad
P_1^pX_p\ket\psi=X_pP_0^p\ket\psi,
\end{equation}
which yields the $X$ line of
Eq.~\eqref{eq:direct-measurement-extraction}.

\subsection{Redundant moments and the role of
\texorpdfstring{$XXX$}{XXX}}

For distinct parties $A,B$ and $t\in\{0,1\}$, the saturated stabilizers and
cross-party commutation give
\begin{equation}
\expect{X_AX_BZ_C^t}=\expect{C_{BC}C_{AC}Z_C^t}.
\end{equation}
The right-hand side is diagonal in the uniform $Z$ distribution, giving
\begin{equation}
\expect{X_AX_B}=\expect{X_AX_BZ_C}=\frac12.
\end{equation}
Cyclic permutations show that the
three $XX$ and three $XXZ$ moments omitted from Table~\ref{tab:direct-data}
are consequences of the assumed data.

To see why the final $XXX$ value matters, choose a uniformly random hidden
bit string $(a,b,c)$ and define deterministic outputs
\begin{equation}
z_A=(-1)^a,\quad z_B=(-1)^b,\quad z_C=(-1)^c,
\end{equation}
\begin{equation}
x_A=(-1)^{bc},\quad x_B=(-1)^{ac},\quad x_C=(-1)^{ab}.
\end{equation}
This local model satisfies the seven pure-$Z$ moments and all twelve
single-$X$ mixed moments, but gives $\expect{X_AX_BX_C}=0$.  Thus the $XXX$
condition excludes a local correlation compatible with the other nineteen
equations.

\section{Proof of Theorem~\ref{thm:xz-obstruction}}
\label{app:obstruction-proof}

We give the orbit calculation behind
Theorem~\ref{thm:xz-obstruction}.  For a permutation-invariant Bell expression,
define
\begin{align}
S_X&=X_A+X_B+X_C,&
S_Z&=Z_A+Z_B+Z_C,\notag\\
S_{XX}&=X_AX_B+X_AX_C+X_BX_C,&
S_{ZZ}&=Z_AZ_B+Z_AZ_C+Z_BZ_C,\notag\\
S_{XZ}&=\sum_{p\ne q}X_pZ_q,&
S_{XXX}&=X_AX_BX_C,\notag\\
S_{XXZ}&=X_AX_BZ_C+X_AX_CZ_B+X_BX_CZ_A,\notag\\
S_{XZZ}&=X_AZ_BZ_C+X_BZ_AZ_C+X_CZ_AZ_B,&
S_{ZZZ}&=Z_AZ_BZ_C.
\label{eq:xz-nine-orbits}
\end{align}
Every such two-setting binary-output Bell expression is, up to an irrelevant
constant,
\begin{equation}
\mathcal I_{\bm\beta}=aS_X+bS_Z+cS_{XX}+dS_{XZ}+eS_{ZZ}
+fS_{XXX}+gS_{XXZ}+hS_{XZZ}+iS_{ZZZ}.
\label{eq:pi-xz-functional}
\end{equation}
Probability-form terms introduce no additional freedom: binary probabilities
have a correlator Fourier expansion and nonsignaling identifies a marginal
across contexts.

Let $\ket{s_w}$ be the unnormalized sum of computational basis vectors of
Hamming weight $w$.  Then
\begin{equation}
\sqrt8\Hccz=\ket{s_0}+\ket{s_1}+\ket{s_2}-\ket{s_3}.
\end{equation}
The action of the nine orbit operators on this vector, expressed in the
ordered weight basis, is shown in Table~\ref{tab:weight-action}.

\begin{table*}[t]
\caption{Coefficients of each orbit operator acting on
$\sqrt8\Hccz$, in the basis
$(\ket{s_0},\ket{s_1},\ket{s_2},\ket{s_3})$.}
\label{tab:weight-action}
\centering
\begin{tabular}{lrrrrrrrrr}
\toprule
&$S_X$&$S_Z$&$S_{XX}$&$S_{XZ}$&$S_{ZZ}$&$S_{XXX}$&$S_{XXZ}$&$S_{XZZ}$&$S_{ZZZ}$\\
\midrule
$s_0$&3&3&3&6&3&$-1$&3&3&1\\
$s_1$&3&1&1&2&$-1$&1&3&$-1$&$-1$\\
$s_2$&1&$-1$&3&2&$-1$&1&$-1$&$-3$&1\\
$s_3$&3&3&3&$-6$&$-3$&1&$-3$&3&1\\
\bottomrule
\end{tabular}
\end{table*}

If the canonical implementation attains the largest quantum value, then
$\Hccz$ must be an eigenvector of its fixed Bell operator.  If the resulting
weight coefficients are $(r_0,r_1,r_2,r_3)$, the eigenvector condition is
$r_0=r_1=r_2=-r_3$.  Eliminating the four coefficients gives
Eq.~\eqref{eq:obstruction-eigen-conditions}, or equivalently
\begin{align}
a&=2d+2e-f+h+\frac23i,\notag\\
b&=-2d-2e+f-g-2h-\frac13i,\notag\\
c&=g-\frac23i.
\label{eq:obstruction-eliminated}
\end{align}
The canonical CCZ value is then $6d+3e-f+3g$.

The local correlation in
Eq.~\eqref{eq:obstruction-local-vector} is built from the three deterministic
assignments
\begin{center}
\begin{tabular}{c|c|c}
\toprule
&$(z_A,z_B,z_C)$&$(x_A,x_B,x_C)$\\
\midrule
$\lambda_1$&$(-1,-1,+1)$&$(+1,+1,-1)$\\
$\lambda_2$&$(-1,+1,+1)$&$(+1,+1,+1)$\\
$\lambda_3$&$(+1,+1,+1)$&$(+1,+1,+1)$\\
\bottomrule
\end{tabular}
\end{center}
Let $\operatorname{Sym}(\lambda)$ denote the uniform mixture over all six
party permutations, with repeated deterministic assignments counted with
their group multiplicity.  Then
\begin{equation}
p_L=\frac14\operatorname{Sym}(\lambda_1)
+\frac12\operatorname{Sym}(\lambda_2)
+\frac14\operatorname{Sym}(\lambda_3)
\end{equation}
has the orbit vector in Eq.~\eqref{eq:obstruction-local-vector}.  Substitution
of Eq.~\eqref{eq:obstruction-eliminated} proves
Eq.~\eqref{eq:obstruction-dot-zero}.  Hence a local correlation
attains the same value as the canonical strategy for every
permutation-invariant Bell expression that can have $\Hccz$ as an optimizing
eigenstate.

For a general Bell expression $\mathcal I$, define
\begin{equation}
\overline{\mathcal I}=\frac16\sum_{\pi\in S_3}\pi(\mathcal I).
\end{equation}
If the symmetric canonical correlations maximize $\mathcal I$ with value $q$,
it also maximizes $\overline{\mathcal I}$ with value $q$.  If $L$ is the
local bound of $\mathcal I$, invariance of the local set gives
$\overline L\le L$ for the local bound of the average.  The preceding
permutation-invariant argument gives $\overline L\ge q$, and therefore
$L\ge q$.  Conversely, the local set is contained in the quantum set, so
$L\le q$.  Hence $\overline L=L=q$, and the maximum over the finite local
polytope is attained by a deterministic local correlation.  This completes
the proof for arbitrary linear Bell expressions.

\section{Bell algebra, local bound, and target values}
\label{app:bell-algebra}

The algebra for Theorem~\ref{thm:bell-ccz} is generated by the nine letters
$Z_p,X_p,D_p$.  Within one party, only adjacent repetitions are removed using
$M_p^2=\id$; distinct same-party letters are neither commuted nor
anticommuted.  Letters of different parties commute and are placed in the
fixed party order $A,B,C$.  This defines a canonical reduced word for every
monomial and prevents the reference Pauli algebra from being inserted into
the black-box problem.

The orbit convention in Section~\ref{sec:bell} expands as
\begin{align}
S_X={}&X_A+X_B+X_C,\notag\\
S_{DX}={}&D_AX_B+D_AX_C+D_BX_A+D_BX_C+D_CX_A+D_CX_B,\notag\\
S_{XX}={}&X_AX_B+X_AX_C+X_BX_C,\notag\\
S_{XZ}={}&X_AZ_B+X_AZ_C+X_BZ_A+X_BZ_C+X_CZ_A+X_CZ_B,\notag\\
S_{DXZ}={}&D_AX_BZ_C+D_AX_CZ_B+D_BX_AZ_C
+D_BX_CZ_A\notag\\
&+D_CX_AZ_B+D_CX_BZ_A,\notag\\
S_{DZZ}={}&D_AZ_BZ_C+Z_AD_BZ_C+Z_AZ_BD_C,\notag\\
S_{XXX}={}&X_AX_BX_C,\notag\\
S_{XZZ}={}&X_AZ_BZ_C+Z_AX_BZ_C+Z_AZ_BX_C,\notag\\
S_{ZZZ}={}&Z_AZ_BZ_C.
\label{eq:bell-orbits-expanded}
\end{align}
This gives the 32 distinct terms reported in
Table~\ref{tab:bell-orbits}.

For a deterministic local strategy, each of the nine measurements has a
preassigned output in $\{\pm1\}$.  Substitution into
Eq.~\eqref{eq:final-bell-functional} produces an element of $\field$.
Finite enumeration of all 512 assignments reduces the local-bound statement
to the inequalities
\begin{equation}
\beta_L-B_\lambda\geq0
\qquad(\lambda\in\{\pm1\}^9).
\end{equation}
Equality holds only for the all-$+1$ assignment.  The accompanying verifier
checks all 512 inequalities directly with exact comparisons of numbers
$a+b\sqrt2$ and confirms the unique equality case.

For the target evaluation, substitute
\begin{equation}
Z=\begin{pmatrix}1&0\\0&-1\end{pmatrix},\qquad
X=\begin{pmatrix}0&1\\1&0\end{pmatrix},\qquad
D=\frac{X+Z}{\sqrt2}
\end{equation}
in the trusted reference representation.  Direct Kronecker products give
Eq.~\eqref{eq:target-bell-eigenstate}.  This computation proves attainment;
Appendix~\ref{app:exact-sos} proves the black-box upper bound.

The reference correlators can all be obtained directly from
Eq.~\eqref{eq:ccz-state-intro}.  For the trusted observables
\begin{equation}
Z=\sigma_z,
\qquad X=\sigma_x,
\qquad D=\frac{Z+X}{\sqrt2},
\end{equation}
the one-body values are
\begin{equation}
\expect Z=0,
\qquad
\expect X=\frac12,
\qquad
\expect D=\frac{\sqrt2}{4}.
\label{eq:one-body-reference-table}
\end{equation}
For any ordered pair of distinct parties, permutation symmetry gives the
two-body table
\begin{equation}
\begin{array}{c|ccc}
 & Z & X & D \\ \hline
Z & 0 & \frac12 & \frac{\sqrt2}{4} \\
X & \frac12 & \frac12 & \frac{\sqrt2}{2} \\
D & \frac{\sqrt2}{4} & \frac{\sqrt2}{2} & \frac34
\end{array}.
\label{eq:two-body-reference-table}
\end{equation}
The three-body value depends only on the multiset of settings:
\begin{equation}
\begin{array}{c|c@{\qquad}c|c}
ZZZ & 0 & ZZX & -\frac12 \\
ZZD & -\frac{\sqrt2}{4} & ZXX & \frac12 \\
ZXD & 0 & ZDD & -\frac14 \\
XXX & \frac12 & XXD & \frac{\sqrt2}{2} \\
XDD & \frac12 & DDD & \frac{\sqrt2}{8}
\end{array}.
\label{eq:three-body-reference-table}
\end{equation}
Every permutation of a displayed multiset has the same value.  These data
reproduce every ideal orbit value and contribution in
Table~\ref{tab:bell-orbits}.  In particular, the $X/Z$ entries recover the
twenty inputs of Theorem~\ref{thm:direct-ccz}.

\section{Sum-of-squares proof of the quantum bound}
\label{app:exact-sos}

This appendix specifies the finite algebraic object that proves the quantum
upper bound.  Let $v_2$ contain all canonical words of total degree at most
two.  There is one degree-zero word, nine degree-one words, eighteen ordered
same-party degree-two words, and twenty-seven cross-party products.  Hence
\begin{equation}
\lvert v_2\rvert=1+9+18+27=55.
\end{equation}

Let $\ket{\widetilde H_3}=\sqrt8\Hccz$ and define the ideal word-evaluation
matrix by
\begin{equation}
(W_\star)_j=v_{2,j}\ket{\widetilde H_3},
\qquad W_\star\in\field^{8\times55},
\end{equation}
where the trusted reference measurements are used only for this ideal
evaluation.  Row reduction over $\mathbb Q(\sqrt2)$ gives
\begin{equation}
\rank W_\star=8,\qquad \dim\ker W_\star=47.
\end{equation}
Selecting an invertible $8\times8$ pivot submatrix and reducing the
remaining columns yields a full-column-rank matrix
\begin{equation}
R\in\field^{55\times47},\qquad W_\star R=0,
\qquad \operatorname{range}R=\ker W_\star.
\label{eq:exact-r-kernel}
\end{equation}

The certificate stores the 1128 upper-triangular entries of a symmetric
$47\times47$ matrix $S$ over $\field$.  A verifier reconstructs the full
matrix and forms $G=RSR^T$.  Expanding $v_2^\dagger Gv_2$, reducing all
ordered word products, and collecting equal terms produces 784 canonical
words.  For each canonical word, the verifier checks in
$\mathbb Q(\sqrt2)$ that its coefficient equals the corresponding coefficient
of $\beta_Q\id-\Bell$.  The resulting residual is identically zero, proving the
formal identity in Eq.~\eqref{eq:exact-sos-main}.

Positive definiteness is certified exactly.  The certificate includes an
invertible rational upper-triangular matrix $P$.  Set
\begin{equation}
T=P^TSP,\qquad E=T-\id.
\end{equation}
Rational interval arithmetic enclosing $\sqrt2$, independently repeated with
symbolic pairs $a+b\sqrt2$, verifies
\begin{equation}
\beta=\max_i\sum_j\lvert E_{ij}\rvert<1.
\label{eq:gershgorin-bound-app}
\end{equation}
The displayed decimal
$\beta\simeq3.359534392389232\times10^{-6}$ is a readable approximation; the
strict comparison in
Eq.~\eqref{eq:gershgorin-bound-app} uses rational and integer inequalities.
Gershgorin's theorem gives $T\succ0$, and congruence by invertible $P$ gives
$S\succ0$.  Therefore $G\succeq0$, and
Eq.~\eqref{eq:exact-sos-main} proves $\Bell\le\beta_Q$ for every black-box
reflection realization.  Target attainment then proves the claimed
quantum maximum.

The same data give a rigorous spectral constant.  For $x=Py$,
\begin{equation}
x^TSx=y^TTy\geq(1-\beta)\norm{y}^2
\geq\frac{1-\beta}{\norm{P}_F^2}\norm{x}^2.
\end{equation}
The rational values stored in the certificate satisfy
\begin{equation}
\frac{1-\beta}{\norm{P}_F^2}>\frac1{7000},
\label{eq:lambda-min-certified}
\end{equation}
so $\lambda_{\min}(S)>1/7000$.  The verifier checks this rational inequality
exactly.

The numerical SDP supplies an interior candidate.  The verified algebraic
identity and exact positivity proof establish the theorem; floating affine
residuals and Gram eigenvalues are retained as diagnostics.

Suppose now that $\expect{\Bell}=\beta_Q$ and define the column of Hilbert
space vectors
\begin{equation}
\ket y=R^Tv_2\ket\psi.
\end{equation}
Taking the expectation of Eq.~\eqref{eq:exact-sos-main} gives
$\langle y\rvert S\lvert y\rangle=0$.  Since $S\succ0$, every component of
$\ket y$ vanishes:
\begin{equation}
R^Tv_2\ket\psi=0.
\label{eq:full-kernel-forced}
\end{equation}
Thus every degree-two relation in the ideal kernel is enforced on the unknown
maximal realization.  The physical relations used for extraction are listed
and analyzed in Appendix~\ref{app:equality-swap}.

The relation space also admits a permutation decomposition.  Its
$S_3$ multiplicities are twelve trivial, three sign, and sixteen standard
representations,
\begin{equation}
12\mathbf1\oplus3\mathbf{sgn}\oplus16\mathbf{std}.
\end{equation}
This decomposition can reduce the presentation of invariant Gram matrices
and may be useful for related symmetric constructions.

\section{From maximal violation to the SWAP isometry}
\label{app:equality-swap}

An independent program reconstructs the 55-word basis and checks that fifteen
physical relations lie in the kernel forced by
Eq.~\eqref{eq:full-kernel-forced}.  For every party $p$, they include the
linear $D$--$X$--$Z$ relation, the ordered $DX$ and $DZ$ relations,
anticommutation, and the hypergraph stabilizer.  We derive the extraction
using state-dependent operations.

Suppress the party label.  The three relations
\begin{align}
(\sqrt2D-X-Z)\ket\psi&=0,\notag\\
(\sqrt2DX-\id-ZX)\ket\psi&=0,\notag\\
(\sqrt2DZ-XZ-\id)\ket\psi&=0
\label{eq:three-calibration-relations-app}
\end{align}
are separately forced by the ideal kernel.  Left-multiply the first zero
vector by $D$ and use $D^2=\id$ to obtain
\begin{equation}
\sqrt2\ket\psi=(DX+DZ)\ket\psi.
\end{equation}
Substitution of the two ordered degree-two relations gives
\begin{equation}
2\ket\psi=(2\id+ZX+XZ)\ket\psi,
\end{equation}
and therefore $\{X,Z\}\ket\psi=0$.  At no point is the first line of
Eq.~\eqref{eq:three-calibration-relations-app} right-multiplied by $X$ or
$Z$; doing so would incorrectly promote a state relation to an operator
identity.

For distinct $\{p,q,r\}=\{A,B,C\}$, the stabilizer relation is
\begin{equation}
X_p\ket\psi=C_{qr}\ket\psi,\qquad
C_{qr}=\frac{\id+Z_q+Z_r-Z_qZ_r}{2}.
\label{eq:bell-stabilizer-app}
\end{equation}
Let $Q_{bc}=P_b^qP_c^r$.  Because $Q_{bc}$ belongs to the other parties, it
commutes with local operators of party $p$, and
\begin{equation}
C_{qr}Q_{bc}=(-1)^{bc}Q_{bc}.
\end{equation}
The state-dependent anticommutator may therefore be conditioned by $Q_{bc}$.
Direct expansion of the projectors gives
\begin{equation}
XP_a\ket\varphi=P_{1-a}X\ket\varphi
\quad\text{whenever}\quad \{X,Z\}\ket\varphi=0.
\end{equation}
Applying this identity to $\ket\varphi=Q_{bc}\ket\psi$ yields
\begin{equation}
X_A\ket{\psi_{abc}}=(-1)^{bc}\ket{\psi_{1-a,b,c}},
\end{equation}
and its cyclic variants.  This use of cross-party conditioning avoids the
generally invalid implication
$\{X_A,Z_A\}\ket\psi=0\Rightarrow
\{X_A,Z_A\}P_a^A\ket\psi=0$.

Each branch flip is implemented by a unitary reflection, so adjacent branch
norms are equal.  The branch cube is connected and the eight squared norms
sum to one; hence every branch has squared norm $1/8$.  Expanding the isometry
and applying the flips gives
\begin{equation}
X_A^aX_B^bX_C^c\ket{\psi_{abc}}
=(-1)^{abc}\ket{\psi_{000}}.
\end{equation}
With $\ket{\mathrm{junk}}=\sqrt8\ket{\psi_{000}}$, this proves state
extraction.

For measurement extraction, $Z_pP_a^p=(-1)^aP_a^p$ gives
$\Phi_pZ_p=(\sigma_z\otimes\id)\Phi_p$ identically.  State-dependent
anticommutation gives
\begin{equation}
P_0^pX_p\ket\psi=X_pP_1^p\ket\psi,\qquad
P_1^pX_p\ket\psi=X_pP_0^p\ket\psi,
\end{equation}
and hence the extracted $X$ action.  Finally, applying the first line of
Eq.~\eqref{eq:three-calibration-relations-app} to $\ket\psi$ gives
\begin{align}
\Phi D_p\ket\psi
&=\frac1{\sqrt2}\left(\Phi X_p\ket\psi+\Phi Z_p\ket\psi\right)\notag\\
&=\frac{\sigma_x^{(p)}+\sigma_z^{(p)}}{\sqrt2}
\Hccz\otimes\ket{\mathrm{junk}}.
\end{align}
This completes the analytic equality-to-isometry part of
Theorem~\ref{thm:bell-ccz}.

\section{Independent verification of the SOS identity}
\label{app:reproducibility}

The computer-assisted part of Theorem~\ref{thm:bell-ccz} is supported by a
finite machine-readable certificate and two independent exact verification
calculations.  Each calculation reconstructs the noncommutative quotient
algebra, expands the SOS identity, and compares all 784 canonical-word
coefficients using exact arithmetic in $\mathbb Q(\sqrt2)$.  Positive
definiteness of the compressed Gram matrix is proved from a rational congruence
and a rigorous Gershgorin bound.

Separate checks enumerate all 512 deterministic local assignments, evaluate
the target state and measurements, confirm the extraction relations in the
degree-two target space, and verify the branch and SWAP identities used after
maximal violation.  The arbitrary-dimensional proofs are given in
Appendices~A and~B; symbolic calculations verify the finite-dimensional
identities that occur within them.  The
certificate, calculation scripts, dependency information, and complete
reproduction instructions are provided through the repository cited in the
Code and data availability statement.

\FloatBarrier
\bibliographystyle{quantum}
\bibliography{main}

\end{document}